\documentclass[11pt,fleqn]{article}
\usepackage{fullpage}
\usepackage{ifpdf}
\ifpdf    
\usepackage{hyperref}
\else    
\usepackage[hypertex]{hyperref}
\fi
\hypersetup{colorlinks,allcolors=blue}

\usepackage{amsthm,amsmath,amssymb}
\usepackage[boxed]{algorithm}
\usepackage[noend]{algorithmic}
\usepackage{bbm}
\usepackage{graphicx}
\usepackage{xspace}
\usepackage{color}


\newtheorem{theorem}{Theorem}[section]
\newtheorem{lemma}[theorem]{Lemma}
\newtheorem{corollary}[theorem]{Corollary}
\newtheorem{definition}[theorem]{Definition}

\newtheorem{proposition}[theorem]{Proposition}

\newtheorem{claim}[theorem]{Claim}

\newcommand {\ignore} [1] {}
\def \ALG   {\mathsf{ALG}}
\def \Img   {\mathsf{Im}}
\def \Ker   {\mathsf{Ker}}
\def \dim   {\mathsf{dim}}
\newcommand{\R}{{\mathbb R}}
\DeclareMathOperator{\supp}{supp}

\newcommand{\eqdef}{:=}
\newcommand{\veps}{\omega}

\newcommand{\etal}{{\em et al.\ }\xspace}
\newcommand{\ceil}[1]{\lceil #1 \rceil}

\newcommand{\card}[1]{\lvert#1\rvert}
\newcommand{\norm}[1]{\lVert#1\rVert}

\renewcommand{\varrho}{\rho}
\newcommand{\tO}{\tilde{O}}

\title{Batch Sparse Recovery, \\ or How to Leverage the Average Sparsity}
\author{ Alexandr Andoni%
\thanks{Columbia University. 
Email: \texttt{andoni@cs.columbia.edu}.}
\and Lior Kamma%
\thanks{Aarhus University.
Part of this work was done while author was at Weizmann Institute of Science.
Work supported in part by the Israel Science Foundation grant \#897/13 and by a Villum Young Investigator Grant.
Email: \texttt{lior.kamma@cs.au.dk}. 
}
\and Robert Krauthgamer\thanks{Weizmann Institute of Science.
Work supported in part by the Israel Science Foundation grant \#897/13.
Email: \texttt{robert.krauthgamer@weizmann.ac.il}. 
}
\and Eric Price\thanks{The University of Texas at Austin.
Email: \texttt{ecprice@cs.utexas.edu}.}
}

\begin{document}
\maketitle

\thispagestyle{empty}
\setcounter{page}{0}

\begin{abstract}
We introduce a \emph{batch} version of sparse recovery, 
where the goal is to report a sequence of vectors $A_1',\ldots,A_m' \in \mathbb{R}^n$ 
that estimate unknown signals $A_1,\ldots,A_m \in \mathbb{R}^n$ 
using a few linear measurements, each involving exactly one signal vector, 
under an assumption of \emph{average sparsity}. 
More precisely, we want to have
\begin{equation} \label{eq:batchRec}
  \sum_{j \in [m]}{\|A_j- A_j'\|_p^p} 
  \le 
  C \cdot \min \Big\{ \sum_{j \in [m]}{\|A_j - A_j^*\|_p^p} \Big\} 
\end{equation}
for predetermined constants $C \ge 1$ and $p$, 
where the minimum is over all $A_1^*,\ldots,A_m^*\in\R^n$ 
that are $k$-sparse on average.
We assume $k$ is given as input, and ask for the minimal number 
of measurements required to satisfy~\eqref{eq:batchRec}.
The special case $m=1$ is known as stable sparse recovery and has been studied extensively. 

We resolve the question for $p =1$ up to polylogarithmic factors, 
by presenting a randomized adaptive scheme 
that performs $\tilde{O}(km)$ measurements 
and with high probability has output satisfying \eqref{eq:batchRec}, for arbitrarily small $C > 1$.
Finally, we show that adaptivity is necessary for every non-trivial scheme.
\end{abstract}

\newpage

\section{Introduction}

In \emph{sparse recovery} (or \emph{compressed sensing}) 
the goal is to reconstruct a signal vector $x\in\R^n$ 
using only \emph{linear measurements},
meaning that $x$ can be accessed only via queries of a linear form 
$x\mapsto a^\intercal x = \sum_i a_ix_i$.
To reduce the number of linear measurements, 
one usually assumes that the unknown signal $x\in\R^n$ is $k$-sparse
(defined as having at most $k$ non-zero entries, i.e., $\norm{x}_0\le k$),
or that $x$ is close to a $k$-sparse vector $x^*$,
and then the goal is to construct an estimate to $x$.
The astounding development of a concrete mathematical foundation for the problem 
by Cand\`es, Tao and Romberg~\cite{CRT06} and by Donoho~\cite{D06}, 
over a decade ago, has granted the problem huge attention, 
see, e.g., \cite{CW08, GI10, EK12, FR13} for exposition and references.  

Probably the most well-studied version of the problem, 
called \emph{stable sparse recovery}, is formulated as follows. 
A \emph{scheme} for dimension $n$ and sparsity bound $k\in[n]$, consists of 
(a) $t=t(n,k)$ 
linear measurements,
arranged as the rows of a \emph{sensing matrix} $S\in\R^{t \times n}$; and 
(b) a \emph{recovery algorithm} that uses the measurements vector $Sx$
to output $x'\in\R^n$.
Together, these should satisfy, for every signal $x\in\R^n$,
\begin{equation}
  \|x - x'\|_p \le C \min_{\text{$k$-sparse } x^*}\|x - x^*\|_p \;,
\label{eq:sparseRec}
\end{equation}
which is called an $\ell_p/\ell_p$ guarantee, 
where $C \ge 1$ and $p$ are some (predetermined) constants. 
The main goal is to minimize the number of measurements $t=t(n,k)$. 
For $C = \Theta(1)$, known schemes achieve 
the $\ell_1/\ell_1$ guarantee~\eqref{eq:sparseRec} 
using $t = O(k \log(n/k))$ measurements \cite{CRT06},
and this bound is asymptotically tight \cite{BIPW10}. 
A similar upper bound on $t$ is known also 
for the $\ell_2/\ell_2$ guarantee \cite{GLPS12} for random $S$, and where 
\eqref{eq:sparseRec} holds with high probability.

There are many other variants that focus on different considerations, such as constructing $S$ adaptively or achieving approximation factor $C$ arbitrarily close to $1$ at the cost of increasing $t$, see more details in Section~\ref{sec:related}.

\paragraph{Average Sparsity and Batch Recovery.} 

Although it is well established that many signal types are \emph{typically} sparse, 
a reasonable sparsity bound $k$ need not hold for {\em all signals},
and in some natural scenarios, 
a good upper bound might simply not be known in advance.
Consider, for example, $m$ servers in a large network, 
that communicate with a designated controller/coordinator server 
(the so-called {\em message-passing} model).
Denote the frequency vector of the requests made to server $j\in [m]$
by a column vector $A_j\in\R^n$.
To perform network analysis, such as anomaly detection and traffic engineering, 
the coordinator needs to examine information from all the $m$ servers, 
represented as a collective traffic matrix $A=(A_1,\ldots, A_m)\in\R^{n\times m}$.

The bottleneck in this setting is the vast amount of information, 
which exceeds communication constraints. 
Thus the coordinator usually collects only the most relevant data, 
such as the ``heavy'' entries from each server \cite{CQZXH14, Y14}.
Since typical traffic vectors have few heavy entries, 
it is more plausible to assume the columns have \emph{average sparsity} $k$,
than the significantly stricter assumption that every $A_j$ is $k$-sparse.

To formalize such a scenario as the problem of batch sparse recovery, 
we will need the following notation. 
We gather a sequence of column vectors $A_1, \ldots,A_m \in \R^n$
into an $n \times m$ matrix $A\eqdef (A_1, \ldots,A_m)$.
For $p \in [1, \infty)$, we let $\norm{A}_p$ denote the $\ell_p$-norm 
when $A$ is viewed as a ``flat'' vector of dimension $nm$,
i.e., $\norm{A}_p^p \eqdef \sum_{ij} |A_{ij}|^p = \sum_{j \in [m]} \norm{A_j}_p^p$,
for example, $p=2$ gives the Frobenius norm. 
Similarly, let $\|A\|_0 \eqdef \sum_{j \in [m]}{\|A_j\|_0}$ denote the sparsity of $A$,
i.e., the number of nonzero entries in $A$. 

\begin{definition} \label{def:batch}
In {\em batch recovery}, the input is a matrix $A\in\R^{n\times m}$ and 
a parameter $k \in [n]$. 
The goal is to perform linear measurements to columns of $A$, one column in each measurement, 
and then recover a matrix $A'$ satisfying 
$\norm{A - A'}_1 \le C \veps$
for some constant $C \ge 1$, where
\begin{equation*}
\veps = \min\limits_{\text{$(km)$-sparse } A^*}\|A - A^*\|_1 \;.
\end{equation*}
\end{definition}

A scheme for dimension $n$ and sparsity bound $k\in[n]$, consists therefore of 
constructing $m$ sensing matrices $S_1, \ldots, S_m$, where each $S_j \in\R^{t_j \times n}$ and 
a recovery algorithm that uses $S_1A_1,\ldots,S_mA_m$
to output $A'\in\R^{n \times m}$, such that $\|A - A'\|_1 \le C \veps$. The main goal is to minimize the total number of measurements $\sum_j{t_j}$.

\paragraph{Parallelization and Batch Recovery.}
One of the greatest features of linearity, and specifically of matrix-vector multiplication, is that one can perform linear measurements on parts of the input in parallel, and then combine the results.
In practice, when performing linear measurements $Sx$ on $x$, modern-day computers can delegate the computation to a many-core graphic 
processor (GPU). The GPU exploits this property of matrix-vector multiplication, and performs parts of the computation in many cores in parallel, rather than sequentially in the central processor (CPU).
Specifically, the GPU splits the $x$ into vectors $x_1,\ldots,x_m$ of smaller dimension and performs the measurements of each vector in a separate core \cite{FSH04, BGMV11, DB17}.
This model can be viewed as the aforementioned message-passing model, where the servers are the GPU cores, and the controller is the CPU.
If we assume $x \in \mathbb{R}^{nm}$ is approximately $km$-sparse, then $x_1,\ldots,x_m$ are approximately $k$-sparse {\em on average}. 

It might seem plausible, in this case, to treat $x_1,\ldots,x_m$ as one vector and invoke known sparse recovery algorithms. That is, construct a sensing matrix $S \in \mathbb{R}^{t \times mn}$, partition its columns into contiguous sets of size $n$ each, that is $S = \left(S_1 | S_2 | \ldots | S_m\right)$, where $S_j \in \mathbb{R}^{t \times n}$, multiply $S_1x_1,\ldots,S_mx_m$ in parallel (say in the GPU) and then analyze their concatenation $Sx$ in the CPU. The bottleneck in this setup is the bandwidth of the communication between the GPU and the CPU \cite{DAA+15}, since each of the $m$ cores needs to convey a vector of dimension $t = \tO(mk)$ (which is optimal for stable sparse recovery). The total communication consists of $\tO(m^2k)$ entries, which is prohibitive for the channel.

The setting of batch recovery allows more freedom in the choice of $S_1, \ldots, S_m$, as the number of measurements does not have to be identical for all sensing matrices. 
As we show in the next section, every non-trivial batch recovery algorithm must construct the sensing matrices adaptively. 
It should be noted that in practice, recent developments of GPUs allow for very good performance of adaptive algorithms employing GPUs as coprocessors \cite{LFB+12, LFDG14, CZ17}.

\subsection{Main Result} 
\label{sec:main}

Our main result presents an adaptive algorithm that recovers a sequence of $m$ 
vectors with average sparsity $k$ and uses at most $\tO(\varepsilon^{-3/2}km \log n)$ linear measurements to achieve $(1+\varepsilon)$ approximation. 
Our result and the ensuing discussion are stated in terms of the $\ell_1$ norm,
although all our results extend to the $\ell_2$ norm in a standard manner. 

\begin{theorem} \label{th:main}
There is a randomized adaptive scheme for batch recovery that, 
for every input $A,k$ and $\varepsilon$,
outputs a matrix $A'$ such that with high probability 
$\|A - A'\|_1 = (1+\varepsilon)\veps$,
where $\veps$ is the optimum as in Definition~\ref{def:batch}.
The algorithm performs $\tO(\varepsilon^{-3/2}k m \log n)$ linear measurements 
in $O(\log m)$ adaptive rounds.
\end{theorem}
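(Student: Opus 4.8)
The plan is to reduce batch recovery to a sequence of single-column $(1+\varepsilon)$-approximate $\ell_1/\ell_1$ recoveries whose per-column measurement budgets are chosen adaptively across $O(\log m)$ rounds. The guiding observation is that the optimum is a global thresholding: let $\tau$ be the $(km)$-th largest entry of $A$ in absolute value and set $k_j \eqdef \card{\set{i : |A_{ij}| \ge \tau}}$, so that $\sum_{j} k_j = km$ and $\veps = \sum_{|A_{ij}| < \tau} |A_{ij}|$. Hence it suffices to recover, in each column $j$, its $k_j$ entries of magnitude at least $\tau$, each up to a $(1+\varepsilon)$ per-column tolerance; summing the per-column guarantees yields $\norm{A - A'}_1 \le (1+O(\varepsilon)) \veps$. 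The first step is therefore to estimate such a threshold $\tau$ (equivalently, to estimate $\veps$) within a $(1\pm\varepsilon)$ factor using $\tO(\poly(\varepsilon^{-1}) m)$ measurements, by taking $\ell_1$-tail sketches of the columns and combining them at the coordinator. This is the step that genuinely couples the columns, since $\tau$ is a quantile over all $nm$ entries.

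The central difficulty is that the counts $k_j$ are unknown and may be spread arbitrarily across columns subject only to $\sum_j k_j = km$, so a fixed per-column allocation cannot work. I would discover the $k_j$ by a budget-doubling schedule over levels $\ell = 0,1,\ldots,O(\log m)$, maintaining a set of still-\emph{heavy} columns. In round $\ell$ the target sparsity is $s_\ell \eqdef 2^\ell k$; a counting argument gives that at most $km / s_\ell = m/2^\ell$ columns have $k_j \ge s_\ell$, so $O(m/2^\ell)$ columns are active. Each active column receives a fresh non-adaptive $(1+\varepsilon)$-approximate $\ell_1/\ell_1$ recovery with budget $\tO(\poly(\varepsilon^{-1}) s_\ell \log n)$; its recovered entries are subtracted, and an $\ell_1$-sketch of the residual tests whether any entry of magnitude $\ge \tau$ remains, deciding promotion to the next (doubled) level. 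Round $\ell$ thus costs $O(m/2^\ell)\cdot \tO(\poly(\varepsilon^{-1}) s_\ell \log n) = \tO(\poly(\varepsilon^{-1})\, k m \log n)$, and since the measurement cost is the same at every level, summing the geometric series over $O(\log m)$ levels preserves this bound while using only $O(\log m)$ adaptive rounds.

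The main obstacle is the error accounting, because the single error budget $\varepsilon\veps$ must be shared across all $m$ columns with no column overspending. Two failure modes must be excluded. First, a column whose $k_j$ is underestimated would be declared done while still carrying an entry $\ge\tau$, leaking mass into the error; I would prevent this by making the residual test conservative, so that with high probability a column is retired only after all its entries of magnitude $\ge \tau$ have been extracted. The per-column primitives then return $(1+\varepsilon)$-approximations of the respective top-$k_j$ parts, and their errors sum to $(1+\varepsilon)\sum_j \norm{A_j - (A_j)_{k_j}}_1 \le (1+O(\varepsilon))\veps$. Second, aggregating $m$ independent recovery or sketch failures would destroy the $O(1)$ success probability; I would boost each per-column primitive and each sketch to failure probability $1/\poly(nm)$ (which costs only logarithmic factors absorbed into $\tO$) so that a union bound over all columns and all $O(\log m)$ rounds still succeeds with high probability.

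Finally I would tune the $\varepsilon$-dependence, which in the sketch above is an unspecified $\poly(\varepsilon^{-1})$ arising as the product of two competing costs: the precision demanded of the threshold and residual estimates, and the $\varepsilon$-price of each $(1+\varepsilon)$-approximate single-column recovery. Balancing these two sources — trading off higher threshold precision against a looser per-column tolerance — is what I expect to drive the exponent down to the claimed $\varepsilon^{-3/2}$, giving the overall bound $\tO(\varepsilon^{-3/2} k m \log n)$ in $O(\log m)$ rounds.
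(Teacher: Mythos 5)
There is a genuine gap, and it sits at the single point your whole scheme pivots on: the per-column retirement test. You propose to decide promotion by ``an $\ell_1$-sketch of the residual testing whether any entry of magnitude $\ge \tau$ remains,'' but an $\ell_1$ sketch cannot perform this test. A residual can have $\ell_1$ norm far exceeding $\tau$ while containing no entry of magnitude $\ge\tau$ (many small entries), so the conservative version of your test --- retire only when the estimated residual norm drops below roughly $\tau$ --- will keep such columns active at every level; there can be $\Theta(m)$ of them, and at level $\ell$ each costs $\tO(2^{\ell}k\log n)$ measurements, so the total degrades to $\tO(m^{2}k\log n)$. Your counting bound (at most $m/2^{\ell}$ columns have $k_j\ge s_\ell$) controls only the columns that genuinely need promotion, not the columns your test fails to retire. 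Any less conservative test risks retiring a column that still hides an entry of magnitude $\ge\tau$ under a large tail, and even one such miss per column can leak $\Omega(\veps)$ of mass. Detecting an entry of magnitude $\tau$ inside a residual of $\ell_1$ mass $R$ costs on the order of $(R/\tau)\log n$ measurements via heavy-hitter sketches, which is exactly the quantity you have no handle on. The global threshold estimation is similarly under-specified: $\tau$ is the $(km)$-th order statistic over all $nm$ entries, a single column may contain all $km$ of the top entries, and $\poly(1/\varepsilon)$ non-adaptive measurements per column do not suffice to locate such a quantile.

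The paper's proof is built precisely to avoid ever certifying per-column completion against an absolute threshold. Its first phase runs a doubling schedule like yours but retires, in each round, the \emph{relatively} best half of the surviving columns (those with smallest estimated residual norm) --- a rank-based test that needs only constant-factor norm estimates --- and settles for an $O(1)$-approximate $A^{init}$; a telescoping argument over the order statistics $\veps_{(j)}$ shows the retired halves contribute $O(\veps)$ in total. The second phase then buckets the residual columns $D_j=A_j-A^{init}_j$ by estimated norm into sets $\mathcal{E}_i$ and gives each column in $\mathcal{E}_i$ a single recovery with sparsity $100mk/(\varepsilon\card{\mathcal{E}_i})$; by Markov at most an $\varepsilon/100$ fraction of each bucket is under-allocated, and because all columns in a bucket have comparable norm, those columns' mass is an $O(\varepsilon)$ fraction of the bucket's mass, hence $O(\varepsilon)\cdot O(\veps)$ overall. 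The $\varepsilon^{-3/2}$ then falls out mechanically as $\varepsilon^{-1/2}$ (from the Price--Woodruff measurement bound) times $\varepsilon^{-1}$ (from the inflated per-column sparsity), rather than from the balancing argument you gesture at. To salvage your outline you would need to replace the threshold test with a mechanism of this relative, mass-based kind.
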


\paragraph{Batch Recovery and Adaptivity.} 
A surprising and intriguing property that arises in 
batch sparse recovery is the necessity for adaptive algorithms. 
Intuitively, every algorithm for batch recovery must first learn how the heavy entries are distributed across the columns, 
{\em before} it can successfully reconstruct $A$.
Formally, let $A^*$ be a $(km)$-sparse matrix such that 
$\norm{A - A^*}_1 \le \veps$, 
and denote the sparsity of its $j$-th column by $k_j \eqdef \norm{A^*_j}_0$.
Once the values $\{k_j\}_{j \in [m]}$ are known, even approximately,
then the columns can be recovered near-optimally by
performing robust sparse recovery separately on each column. 

Indeed, our second result shows that this is more than mere intuition,
and proves that in the \emph{non-adaptive} setting, 
batch recovery is significantly harder than standard sparse recovery. 
Specifically, in Section~\ref{s:nonAdapt} we show that non-adaptive algorithms for reconstructing $A$ 
require $\Omega(nm)$ measurements, even in the noise-free case, as follows.

\begin{theorem} \label{t:nonAdaptiveLower}
For every $m,n$, every non-adaptive randomized scheme for batch recovery must make 
$\Omega(mn)$ linear measurements in the worst case, independently of $k$ 
even when $\veps=0$.
\end{theorem}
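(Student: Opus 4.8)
The plan is to use Yao's minimax principle to reduce this randomized lower bound to a statement about deterministic non-adaptive schemes facing a fixed ``hard'' input distribution, and then to argue that any such scheme using too few measurements fails on a constant fraction of that distribution. Concretely, I would take the distribution $\mathcal D$ that picks an index $j^*\in[m]$ uniformly at random, sets the $j^*$-th column to a vector $v$ drawn uniformly from the cube $[1,2]^n$ (so that $A_{j^*}=v$ is fully dense), and sets every other column to $0$. Each such input has $\|A\|_0=n$, so it is $(km)$-sparse whenever $km\ge n$, and hence $\veps=0$; the resulting bound is $\Omega(mn)$ for every $k$ in this range, which is the sense in which it is independent of $k$.

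First I would fix an arbitrary deterministic non-adaptive scheme, with sensing matrices $S_1,\dots,S_m$ where $S_j\in\R^{t_j\times n}$, and total budget $T=\sum_j t_j$. Because the scheme is non-adaptive and all columns other than $j^*$ are zero, the only measurements carrying information about $v$ are $S_{j^*}v\in\R^{t_{j^*}}$; every other block equals $S_i\cdot 0=0$. Thus, even if the recovery algorithm is told $j^*$ for free, exact reconstruction requires $v\mapsto S_{j^*}v$ to be injective on the support of the conditional distribution. The key step is the observation that if $t_j<n$ then $\ker S_j\neq\{0\}$: choosing $0\neq w\in\ker S_j$, for almost every interior point $v\in[1,2]^n$ the segment $v+\delta w$ stays inside $[1,2]^n$ for small $\delta$ and maps to the same measurement $S_j v$. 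Since the recovery algorithm outputs a single matrix per measurement vector, conditioned on $j^*=j$ it recovers $A$ exactly only on a set of $v$ of measure zero, and therefore fails almost surely.

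Next I would aggregate over columns. Call column $j$ \emph{deficient} if $t_j<n$. At most $T/n$ columns can be non-deficient, so if $T<mn/2$ then strictly more than $m/2$ columns are deficient, and by the previous step the scheme fails with probability greater than $1/2$ over $\mathcal D$. To return to randomized schemes I would invoke the standard averaging argument: a scheme that succeeds with probability at least $2/3$ on every input also succeeds with probability at least $2/3$ on inputs drawn from $\mathcal D$, so some fixing of its internal randomness yields a deterministic scheme with failure probability at most $1/3<1/2$ over $\mathcal D$; by the contrapositive of the counting step this deterministic scheme uses at least $mn/2$ measurements, whence the randomized scheme uses $\Omega(mn)$ measurements in the worst case over its coins.

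The main obstacle I anticipate is not the dimension count itself but making the failure statement quantitatively robust: I must ensure that a single deficient column forces failure with probability bounded away from $0$ (handled above by working with a continuous distribution and the kernel/segment argument, which also sidesteps any cleverness that could exploit a finite alphabet), and I must check that the hard instances genuinely meet Definition~\ref{def:batch}, i.e.\ that they are $(km)$-sparse with $\veps=0$ so that exact recovery is actually demanded. The remaining care is bookkeeping in the reduction from randomized to deterministic schemes, so that the conclusion bounds the worst-case rather than merely the expected number of measurements.
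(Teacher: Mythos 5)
Your proposal is correct and follows essentially the same route as the paper: Yao's minimax principle applied to a distribution that plants one fully dense column (drawn from an absolutely continuous distribution) at a uniformly random index, the observation that a column measured with $t_j<n$ rows has a nontrivial kernel so exact recovery succeeds only on a measure-zero set of that column, and a counting step showing at least $m/2$ columns must receive $n$ measurements each. The only differences are cosmetic (uniform on $[1,2]^n$ versus Gaussian, zero versus unit-vector filler columns), and your explicit remark that the hard instances are $(km)$-sparse only when $km\ge n$ is if anything slightly more careful than the paper's treatment.
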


\subsection{Related Work}\label{sec:related}

\paragraph{Stable Sparse Recovery.}
Recall that a scheme for dimension $n$ and sparsity bound $k\in[n]$, consists of 
sensing matrix $S\in\R^{t \times n}$ and a recovery algorithm that receives $Sx$
and outputs $x'\in\R^n$ such that \eqref{eq:sparseRec} holds for some $C>1$.
This is often referred to as the {\em for all} model,
or sometimes a {\em uniform} or {\em deterministic guarantee}.
In contrast, in the {\em for each} model, 
the scheme (and in particular the matrix $S$) is random,
(drawn from a distribution designed by the algorithm),
and for every signal $x\in\R^n$ with high probability, 
the $\ell_p/\ell_p$ guarantee~\eqref{eq:sparseRec} 
holds.
The approximation factor $C$ can be made arbitrarily close to $1$ at the cost of increasing $t$ \cite{PW11}.
When the measurements may be constructed adaptively, 
there is a scheme with $t = O(k \log \log(n/k))$ \cite{IPW11}.

\paragraph{Matrix Reconstruction.}
The recovery of a matrix from partial or corrupted measurements 
has numerous applications in theoretical fields such as streaming 
and sublinear algorithms as well as practical ones, 
such as signal processing, communication-networks analysis, 
computer vision and machine learning. 
In many of these natural settings the matrix is typically sparse. 
Such settings include covariance matrices \cite{DSBN15}, 
adjacency matrices of sparse or random graphs \cite{M09,DSBN15}, 
image and video processing for facial recognition \cite{WYGSM09} 
and medical imaging \cite{M08}, 
in addition to traffic analysis of large communication networks \cite{CQZXH14}. 
Woodruff and Zhang \cite{WZ12, WZ13} considered 
a distributed model known as the {\em message-passing} model, 
similar to that described in the previous section. $m$ servers, 
each holding partial information  
regarding an unknown matrix $A$ (not necessarily a column, though),
need to communicate with a designated coordinator in order 
to compute some function of $A$. 
The communication they consider is not restricted to linear measurements.
They show communication complexity lower bounds 
in terms of bit-complexity for several designated 
functions (e.g. $\|A\|_0$ or $\|A\|_\infty$).

Considerable work has been made on the reconstruction of a matrix from
linear measurements performed on the matrix rather than on each column separately. 
That is, each measurement is of the form 
$A \mapsto B \bullet A = \sum{A_{ij}B_{ij}}$, 
where $B \in \mathbb{R}^{n \times m}$ is a sensing ``vector''. 
This model offers a much richer set of linear measurements, and, in fact, 
in terms of sparse recovery reduces matrix reconstruction to 
stable sparse recovery, albeit in some models, 
e.g. the message-passing model, such measurements are infeasible. 
Waters, Sankaranarayanan and Baraniuk \cite{WSB11} give an adaptive sensing 
algorithm for recovering a matrix which is the sum of a low-rank 
matrix and a sparse matrix using linear measurements on $A$. 

Dasarathy \etal \cite{DSBN15} recently considered matrix recovery 
from {\em bilinear} measurements (also called  {\em tensor products}),
i.e., measurements of the form $A \mapsto v^tAu = \sum{A_{ij}v_iu_j}$, 
where $v,u$ are sensing vectors. 
This model can be viewed as a restriction of the aforementioned
model, in which every sensing vector $B$ is a matrix of rank $1$. 
They show how to reconstruct a sparse matrix using bilinear measurements, 
in the special case where the heavy entries of $A$ are not concentrated
in a few columns.

\section{Preliminaries}
For every $x \in \mathbb{R}^n$ let $\supp(x) \eqdef \{j \in [n]: x_j \ne 0\}$ be the set of non-zero entries of $x$, and denote $\|x\|_0 = | \supp(x) |$. 
\begin{definition}
Let $u,v \in \mathbb{R}^n$. We say that $u$ {\em agrees} with $v$ if $\supp(u) \subseteq \supp(v)$, and in addition, for every $j \in \supp(u)$, $u_j = v_j$.
\end{definition}
Note that the relation defined above is not symmetric. The following straightforward claim demonstrates some fundamental properties of the agreement relation.
\begin{claim}
Let $u,v,w \in \mathbb{R}^n$, and assume $u$ agrees with $v$, then
\begin{enumerate}
	\item $v-u$ agrees with $v$;
	\item $\|u\|_1 \le \|v\|_1$ ; and 
	\item if $w$ agrees with $v-u$, then $u+w$ agrees with $v$.
\end{enumerate}
\end{claim}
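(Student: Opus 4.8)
The plan is to prove all three items by direct coordinate-wise reasoning, simply unwinding the definition of agreement; no clever machinery is needed, and I would prove them in order so that the later parts can reuse the earlier ones. The underlying characterization I will lean on is that $u$ agrees with $v$ if and only if for every coordinate $j$ either $u_j = 0$ or else $u_j = v_j$; equivalently, $\supp(u) \subseteq \supp(v)$ and the two vectors coincide on $\supp(u)$.

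For the first item I would fix a coordinate $j$ and split on whether $j \in \supp(u)$. If $j \in \supp(u)$, then $u_j = v_j$ by agreement, so $(v-u)_j = 0$ and $j \notin \supp(v-u)$. If $j \notin \supp(u)$, then $u_j = 0$ and $(v-u)_j = v_j$, so whenever $j \in \supp(v-u)$ we get $(v-u)_j = v_j \neq 0$; this yields both $\supp(v-u) \subseteq \supp(v)$ and the required equality, hence $v-u$ agrees with $v$. The conclusion I will record for later use is the exact identity $\supp(v-u) = \supp(v) \setminus \supp(u)$, with $v-u$ equal to $v$ on that set.

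For the second item, agreement gives $\supp(u) \subseteq \supp(v)$ and $u_j = v_j$ on $\supp(u)$, so
\[
  \|u\|_1 = \sum_{j \in \supp(u)} |u_j| = \sum_{j \in \supp(u)} |v_j| \le \sum_{j \in \supp(v)} |v_j| = \|v\|_1,
\]
where the inequality merely drops the nonnegative terms indexed by $\supp(v) \setminus \supp(u)$.

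The third item is the only one that combines hypotheses, and I expect its (mild) crux to be a single disjointness observation: $\supp(u)$ and $\supp(w)$ are disjoint. Indeed, since $w$ agrees with $v-u$ we have $\supp(w) \subseteq \supp(v-u)$, and by the first item $\supp(v-u) = \supp(v) \setminus \supp(u)$, which is disjoint from $\supp(u)$. Given this, on $\supp(u)$ the sum $u+w$ equals $u$, which equals $v$ by agreement of $u$ with $v$; on $\supp(w)$ the sum equals $w$, which equals $(v-u)_j = v_j$ there (using $\supp(w) \subseteq \supp(v)\setminus\supp(u)$, so $u$ vanishes and $v-u = v$); and off $\supp(u) \cup \supp(w)$ the sum is zero. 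This simultaneously shows $\supp(u+w) \subseteq \supp(u) \cup \supp(w) \subseteq \supp(v)$ and that $u+w$ coincides with $v$ on its support, i.e.\ $u+w$ agrees with $v$, completing the proof.
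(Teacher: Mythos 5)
Your proof is correct in all three items; the paper itself states this claim without proof, dismissing it as straightforward, so your coordinate-wise unwinding of the definition is exactly the intended argument. The one small bonus in your write-up is recording the identity $\supp(v-u)=\supp(v)\setminus\supp(u)$ from item~1, which makes the disjointness of $\supp(u)$ and $\supp(w)$ in item~3 immediate.
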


It is well known \cite{PW11} that given a sparsity parameter $s$ and some $\varepsilon>0$, 
there exist sensing matrices and associated recovery algorithms for the $(1+\varepsilon)$-approximate 
stable sparse recovery problem such that the number of linear measurements is $\tO(\varepsilon^{-1/2}s \log n)$.
\begin{theorem}[\cite{PW11}]\label{th:compNoise}
Let $s \in [n]$ and $\varepsilon>0$, there exists a random sensing matrix $S \in \mathbb{R}^{t \times n}$ 
for $t=O\left(\frac{\log^3 1/\varepsilon}{\sqrt{\varepsilon}}s \log n\right)$, and an associated recovery algorithm 
such that for every $x \in \mathbb{R}^n$, given $Sx$, produces  
a vector $x'(s)$ that satisfies $\|x - x'(s)\|_1 \le (1+\varepsilon) \min\limits_{x^* \;\; s-sparse}\|x - x^*\|_1$.
Moreover, such $S$ can be found efficiently.
\end{theorem}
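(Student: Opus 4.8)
The statement is a ``for each'' (randomized, per-signal) guarantee, so the plan is to design a single random sensing matrix $S$ from a CountSketch-style hashing scheme and to split recovery into an \emph{identification} phase and an \emph{estimation} phase. Concretely, I would take $S$ to consist of $R = O(\log n)$ independent hash tables, each hashing the $n$ coordinates into $B$ buckets with random signs, so that measurement $(r,b)$ equals $\sum_{i : h_r(i) = b} \sigma_{r,i} x_i$; the total number of rows is $RB$, and the construction is oblivious and efficiently samplable, which yields the ``found efficiently'' clause for free. Writing $\mu \eqdef \min_{x^*\ s\text{-sparse}} \norm{x-x^*}_1$ for the optimum, the goal is to output $\hat x = x'(s)$ with $\norm{x - \hat x}_1 \le (1+\varepsilon)\mu$.

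For identification I would argue that with $B = O(s)$ buckets and $R = O(\log n)$ repetitions, every coordinate $i$ whose magnitude exceeds a threshold $\tau \sim \mu/s$ is, in a constant fraction of the tables, \emph{isolated} from the other heavy coordinates, so its bucket reads off $x_i$ up to tail noise of order $\mu/s$. A union bound over the $n$ coordinates (the source of the $\log n$ factor) then produces a candidate set $T$ of size $O(s)$ that, with high probability, contains every coordinate heavier than $\tau$; this much is the classical constant-factor $\ell_1/\ell_1$ scheme \cite{CRT06} and costs only $O(s\log n)$ measurements. The mass outside $T$ is already within a constant factor of $\mu$, and the remaining task is to drive the estimation error on $T$ down to $\varepsilon\mu$.

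Estimation is the crux, and where the $\varepsilon^{-1/2}$ savings live. The naive bound estimates each of the $O(s)$ coordinates of $T$ to additive error $\varepsilon\mu/s$; since a hash table with $B$ buckets gives per-coordinate error of order $\mu/B$ (or, with random signs, of order $\norm{x_{-s}}_2/\sqrt{B}$ after averaging over repetitions), this forces $B \gtrsim s/\varepsilon$. To beat this I would \emph{not} estimate every coordinate to the same precision: partition the coordinates of $T$ into $O(\log(1/\varepsilon))$ geometric weight classes and exploit that the classes of large coordinates are sparse, hence tolerate a coarser relative precision, while only a controlled number of coordinates sit near the threshold and need the finest estimate. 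Allocating a bucket budget per weight class and bounding the \emph{aggregate} $\ell_1$ error against $\mu$ — rather than each coordinate's error separately — is what converts the $1/\varepsilon$ factor into $\varepsilon^{-1/2}$, at the cost of the $\log^3(1/\varepsilon)$ overhead from the $O(\log(1/\varepsilon))$ levels and the repetitions needed to control each level with high probability. I expect this weight-level balancing, together with $\ell_1$ geometry (relating $\sqrt{s}\,\norm{x_{-s}}_2$ to $\norm{x_{-s}}_1$ for the spread-out tail) to certify the aggregate bound, to be the main obstacle.

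Finally I would assemble the estimate by setting $\hat x_i$ to the averaged, sign-corrected bucket value for $i \in T$ and $\hat x_i = 0$ otherwise, prune $T$ so that including a coordinate never increases the error, and take a union bound over the identification event, the per-level estimation events, and the $O(\log n)$ tables to conclude $\norm{x - \hat x}_1 \le (1+\varepsilon)\mu$ with high probability. Summing the two phases gives $t = O\!\big(s\log n + \varepsilon^{-1/2}\,\mathrm{polylog}(1/\varepsilon)\, s\log n\big) = O\!\big(\tfrac{\log^3(1/\varepsilon)}{\sqrt{\varepsilon}}\, s\log n\big)$, as claimed.
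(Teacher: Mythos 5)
You should first know that the paper contains no proof of Theorem~\ref{th:compNoise} at all: it is quoted verbatim from \cite{PW11} and used downstream as a black box, so the only meaningful benchmark is the original Price--Woodruff argument. Measured against that, your skeleton (CountSketch-style hashing, an identification phase producing a candidate set $T$ of size $O(s)$, an estimation phase, and aggregate $\ell_1$ accounting) is the standard template, and it delivers everything \emph{except} the one thing the theorem is actually about: the $\varepsilon^{-1/2}$ dependence. That step you assert rather than prove (``allocating a bucket budget per weight class \ldots is what converts the $1/\varepsilon$ factor into $\varepsilon^{-1/2}$''), and you yourself flag it as the main obstacle; unfortunately it is not a technical detail but the entire content of \cite{PW11}.

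To see concretely why the sketch stalls: with $B \ge 2s$ buckets and random signs, the per-coordinate error of one hash table is of order $\norm{x_{-\Omega(B)}}_2/\sqrt{B}$, and since $\norm{x_{-\Omega(B)}}_2 \le \sqrt{\norm{x_{-\Omega(B)}}_\infty \cdot \norm{x_{-s}}_1} \le \sqrt{(\mu/B)\cdot \mu}$, this is $O(\mu/B)$; repetitions with medians only concentrate this error, they do not shrink it. Summing over the $O(s)$ coordinates of $T$ gives total error $O(s\mu/B)$, so your aggregate accounting forces $B = \Theta(s/\varepsilon)$ --- i.e., exactly the classical $\varepsilon^{-1}$ bound you set out to beat. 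Nor does the weight-class refinement remove the bottleneck: the hard class is the $\Theta(s)$ coordinates of magnitude $\Theta(\mu/s)$ sitting at the threshold. All of them can lie in the optimum's support, so omitting one costs $\Theta(\mu/s)$ against an optimum that pays nothing, and hence each must be estimated to additive precision $O(\varepsilon\mu/s)$; under the $\mu/B$ per-coordinate error, this \emph{single} class already forces $B = \Omega(s/\varepsilon)$, so no allocation across the $O(\log(1/\varepsilon))$ classes can help. The actual proof in \cite{PW11} uses a genuinely different mechanism --- a multi-stage scheme trading sparsity against precision across $O(\log(1/\varepsilon))$ scales (the source of the $\log^3(1/\varepsilon)$ overhead), recovering more than $s$ coordinates and charging their estimation error against the tail mass their recovery removes from the residual --- and nothing in your sketch supplies that accounting. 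A minor point besides: attributing the hashing-based constant-factor identification to \cite{CRT06} is misplaced, as that work is the RIP/basis-pursuit ``for all'' guarantee, not a CountSketch-type ``for each'' scheme.
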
 

Our algorithms use this result of \cite{PW11} 
as a step in the construction of an estimate. 
We note that the choice of a specific recovery algorithm is 
not crucial. 
The role of Theorem~\ref{th:compNoise} in our algorithms 
can be replaced by {\em any} (adaptive or non-adaptive) 
stable sparse recovery guarantee.
Thus, for example, Theorem~\ref{th:compNoise} can be replaced 
by an adaptive sparse recovery result (e.g. \cite{IPW11}), thus 
performing less measurements at the cost of increasing the number of 
adaptive rounds.

Indyk \cite{I06} showed that one can construct a matrix 
$S \in \mathbb{R}^{t \times n}$, where $t =  O\left(\log n\right)$ 
such that for every $x \in \mathbb{R}^n$, given $Sx$, we can estimate $\|x\|_1$ 
up to a constant factor with high probability.
\begin{theorem}[\cite{I06}]\label{th:estimateNorm}
There exists a sensing matrix $S \in \mathbb{R}^{t \times n}$ for $t=O\left(\log n \right)$, 
and an associated algorithm such that for every $x \in \mathbb{R}^n$, given $Sx$, 
produces a number $\varrho(x)$ that satisfies $\frac{1}{2}\|x\|_1 \le \varrho(x) \le 2\|x\|_1$ 
with probability at least $1 - \frac{1}{n^{\Omega(1)}}$.
Moreover, such $S$ can be found efficiently. 
\end{theorem}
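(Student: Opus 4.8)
The plan is to use the $1$-stability of the Cauchy distribution together with a median estimator. First I would let each of the $t$ rows of $S$ have entries drawn i.i.d.\ from the standard Cauchy distribution. By $1$-stability, for any fixed $x$ the inner product of a single row with $x$ is distributed exactly as $\|x\|_1 \cdot Z$, where $Z$ is a standard Cauchy variable; moreover the $t$ coordinates of $Sx$ are mutually independent. Thus $Sx$ equals $\|x\|_1 \cdot (Z_1, \ldots, Z_t)$ in distribution, and the problem reduces to estimating the scale parameter $\|x\|_1$ from $t$ independent Cauchy samples scaled by it.

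The estimator I would use is $\varrho(x) = \operatorname{median}_i |(Sx)_i|$. The key fact is that the median of $|Z|$ for a standard Cauchy is exactly $1$: since the CDF is $\tfrac12 + \tfrac{1}{\pi}\arctan$, we have $\Pr[|Z| \le a] = \tfrac{2}{\pi}\arctan a$, so $\Pr[|Z| \le 1] = \tfrac12$, while $\Pr[|Z| \le \tfrac12]$ and $\Pr[|Z| \le 2]$ are absolute constants strictly below and above $\tfrac12$ respectively. Therefore the empirical median of the $|Z_i|$ should lie in $[\tfrac12, 2]$ with high probability, and rescaling by $\|x\|_1$ gives $\tfrac12\|x\|_1 \le \varrho(x) \le 2\|x\|_1$.

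To make this precise I would bound the two failure events separately. The event $\varrho(x) > 2\|x\|_1$ occurs only if fewer than $t/2$ of the indices satisfy $|Z_i| \le 2$; since each such indicator has mean $\tfrac{2}{\pi}\arctan 2 > \tfrac12$, a Chernoff bound shows this deviation has probability $e^{-\Omega(t)}$. Symmetrically, $\varrho(x) < \tfrac12\|x\|_1$ requires at least $t/2$ indices with $|Z_i| \le \tfrac12$, whose mean count is $\tfrac{2}{\pi}\arctan\tfrac12 \cdot t < t/2$, again failing only with probability $e^{-\Omega(t)}$. Taking $t = O(\log n)$ with a large enough constant drives the total failure probability below $n^{-\Omega(1)}$, as required.

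Finally, to address that $S$ ``can be found efficiently,'' I would note that a standard Cauchy sample is generated as $\tan(\pi(U-\tfrac12))$ for $U$ uniform on $(0,1)$, so the whole matrix is sampled in $O(tn)$ time. The only subtlety, which I expect to be the one nontrivial point, is controlling the finite-precision truncation of these continuous samples so that the median estimate is not perturbed out of $[\tfrac12,2]$; this is handled by rounding to $\operatorname{poly}(n)$ bits, which moves each $|Z_i|$ by a negligible amount and preserves the constant gaps exploited in the Chernoff argument.
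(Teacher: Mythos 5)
Your proposal is correct, but note that the paper itself gives no proof of this statement --- it is quoted verbatim as a black-box result from Indyk's work \cite{I06}. Your Cauchy-sketch-plus-median argument (using $1$-stability, the fact that $\Pr[|Z|\le a]=\tfrac{2}{\pi}\arctan a$ so the median of $|Z|$ is $1$ with constant probability gaps at $\tfrac12$ and $2$, and a Chernoff bound over $t=O(\log n)$ independent rows) is precisely the standard proof from that reference, and the details you give, including the precision remark, are sound.
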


\section{Constant-Approximation Batch Reconstruction} \label{sec:mainWeakProof}

In this section we prove a slightly weaker version of Theorem~\ref{th:main}, by presenting an iterative algorithm that approximates an unknown sequence of signals up to some constant factor by an on-average sparse sequence.
Formally, we show the following.

\begin{theorem} \label{th:mainWeak}
There exists a constant $C>1$ such that there is a randomized adaptive scheme for batch recovery that, 
for every input $A$ and $k$,
outputs a matrix $A'$ such that with high probability 
$\|A - A'\|_1 \le C\veps$,
where $\veps$ is the optimum as in Definition~\ref{def:batch}.
The algorithm performs $O(k m \log n \log m)$ linear measurements 
in $O(\log m)$ adaptive rounds, 
and its output $A'$ is $km$-sparse.
\end{theorem}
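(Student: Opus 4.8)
The plan is to reconstruct $A$ via an iterative (adaptive) scheme that, in each round, identifies and recovers a constant fraction of the columns whose individual sparsity happens to be small relative to the current budget, peels off what has been recovered, and recurses on the residual. The central difficulty, as the authors emphasize in the paragraph preceding the theorem, is that we do not know the per-column sparsities $k_j = \|A^*_j\|_0$ in advance, and they may be wildly unbalanced --- a single column could carry nearly all $km$ of the ``mass slots.'' So the scheme must \emph{learn} the distribution of heaviness across columns before committing most of its measurements, which is exactly why adaptivity is unavoidable (Theorem~\ref{t:nonAdaptiveLower}).

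First I would run a cheap ``triage'' step. For each column $A_j$ I apply the norm-estimation sketch of Theorem~\ref{th:estimateNorm}, costing only $O(\log n)$ measurements per column and hence $O(m\log n)$ in total across all $m$ columns in a single round. This yields a constant-factor estimate $\varrho(A_j)$ of $\|A_j\|_1$, which lets me rank the columns by weight. Intuitively, the ``light'' columns are the ones I can afford to recover accurately with a modest sparsity budget, while the ``heavy'' columns demand a larger budget; by estimating norms first I can allocate measurements unevenly, spending more on the few heavy columns, which is precisely the flexibility that Definition~\ref{def:batch} permits (the $t_j$ need not be equal).

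Next, for the columns I decide to recover in the current round, I invoke the $(1+\varepsilon)$-approximate stable sparse recovery scheme of Theorem~\ref{th:compNoise} with a per-column sparsity parameter $s$ tuned to the budget; here a constant $\varepsilon$ (say $\varepsilon = \Theta(1)$) suffices since we only aim for a constant $C$. I then subtract the recovered estimate from each such column to form a residual, \emph{peeling} off recovered mass. The key invariant I would maintain is that the total residual error stays within a constant factor of $\veps$ throughout, using the agreement machinery from the Preliminaries (parts 1 and 3 of the Claim) to argue that subtracting an estimate that agrees with $A_j$ keeps the residual well-behaved and $km$-sparse in aggregate. A geometric-decrease argument --- each round eliminates a constant fraction of the remaining columns (or of the remaining sparsity budget) --- bounds the number of adaptive rounds by $O(\log m)$, matching the theorem. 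Summing the per-round costs gives $O(km\log n)$ measurements per round times $O(\log m)$ rounds, i.e.\ the claimed $O(km\log n\log m)$.

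\textbf{The hard part} will be the allocation/accounting argument: showing that a constant fraction of columns can \emph{always} be handled cheaply in each round, so that the total sparsity spent does not blow up. Because the average-sparsity constraint bounds only $\sum_j k_j \le km$, a naive bound would let a handful of columns absorb the entire budget and stall the recursion. The resolution I would pursue is a Markov-type argument: at most a constant fraction of the $km$ sparsity budget can sit in columns whose (estimated) required budget exceeds twice the average, so the complementary fraction of columns is ``cheap'' and can be recovered this round within a total of $O(km)$ sparsity; peeling these off strictly reduces both the number of active columns and the residual optimum, driving the recursion forward. Making this quantitative --- while folding the constant-factor slack from Theorems~\ref{th:compNoise} and~\ref{th:estimateNorm} into a single final constant $C$, and ensuring the high-probability guarantees compose over $O(\log m)$ rounds via a union bound --- is where the real work lies.
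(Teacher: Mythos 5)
Your skeleton (iterative scheme, halve the active columns each round, $O(\log m)$ rounds at $O(km\log n)$ measurements per round) matches the paper's Algorithm~\ref{alg:noiseWeak}, but two ideas that actually make the proof work are missing or wrong. First, the selection criterion: you propose to triage columns by estimating $\|A_j\|_1$ and treating ``light'' columns as cheap to recover. The $\ell_1$ norm of a column carries essentially no information about how many nonzeros are needed to approximate it --- a single huge entry gives a heavy but $1$-sparse column, while a flat column of tiny entries is light but unrecoverable relative to its own mass. Your later Markov argument (``columns whose estimated required budget exceeds twice the average'') presupposes an estimator of the per-column required budget $\|A^*_j\|_0$, which the triage step never provides. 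The paper's resolution is different: in round $\ell$ it runs sparse recovery with budget $2^{\ell+1}k$ on \emph{every} remaining column and then applies Theorem~\ref{th:estimateNorm} to the \emph{residual} $A_j - A_j^{tmp}$; a small residual certifies that the current budget sufficed for that column. Two Markov bounds (at most $m/2^{\ell+1}$ columns have $\|A^*_j\|_0 > 2^{\ell+1}k$, and at most $m/2^{\ell+1}$ have $\veps_j$ above the $(m/2^{\ell+1})$-th order statistic) guarantee that at least half of the active columns pass this test, so the doubling budget against a halving column set keeps every round at $O(km\log n)$. Note also that the paper does not peel: each round re-runs recovery on the original $A_j$ with a larger budget, and a column is simply frozen once certified; your peel-and-recurse variant would force you to re-derive the average-sparsity structure of the residual, which you do not do.

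Second, and this is the step the paper explicitly flags as the crux, your ``key invariant'' that the total error stays $O(\veps)$ is asserted rather than derived, and the natural version of it fails: bounding the error of the columns fixed in round $\ell$ by $O(2^{\ell+1}\veps/m)$ each (the average-column bound) and summing over $\ell$ gives $O(\veps\log m)$, not $O(\veps)$. The paper avoids the $\log m$ loss by bounding the error of each column fixed in round $\ell$ by $O(\veps_{(2^{-\ell-1}m)})$, the order statistic of the per-column optima, so that $|I_\ell|\cdot\veps_{(2^{-\ell-1}m)} \le 2\sum_{j\in[m/2^{\ell+1},\,m/2^{\ell}-1]}\veps_{(j)}$; these index ranges are disjoint across $\ell$, so the round-by-round contributions telescope to $O(\sum_j \veps_{(j)}) = O(\veps)$. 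Without this order-statistics accounting (or an equivalent substitute), your argument proves at best $C = O(\log m)$, which does not give the constant $C$ claimed in the theorem.
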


The algorithm presented in this section, which is described in detail as Algorithm~\ref{alg:noiseWeak} will be used as a "preprocessing" step in the proof of Theorem~\ref{th:main} in Section~\ref{sec:mainProof}.

The algorithm performs $\log (2m)$ iterations. Throughout the execution, it maintains a set $I \subseteq [m]$, of the indices of all columns not yet fixed by the algorithm. Initially $I$ is the entire set $[m]$. At every iteration $\ell \in [\log (2m)]$, the algorithm performs standard sparse recovery on each of the columns indexed by $I$ with sparsity parameter $2^{\ell+1}k$ and $\varepsilon=2$ (any constant will do here), and constructs vectors $\{A_j^{tmp}\}_{j \in I}$. Applying Theorem~\ref{th:estimateNorm}, the algorithm additionally estimates the residual error $\|A_j - A_j^{tmp}\|_1$ for all $j \in I$, and then chooses the $\frac{1}{2}|I|$ indices for which the residual error is smallest. For each such index $j$, the algorithm fixes $A_j^{alg}$ to be $A_j^{tmp}$, and removes $j$ from $I$. 
After $\log (2m)$ iterations, 
the algorithm returns a ``truncated'' version of the matrix $A^{alg}$, 
namely, a $km$-sparse matrix $A^{fin}$, whose non-zeros are 
simply the $km$ heaviest entries (largest absolute value) in $A^{alg}$. 

\begin{algorithm}[th]
\begin{algorithmic}[1]
\STATE initialize $I \leftarrow [m]$.
\FOR{$\ell = 1$ \TO $\log (2m)$} \label{l:mainStartWeak}
\FORALL{$j \in I$}
\STATE let $A_j^{tmp} \leftarrow A_j'(2^{\ell+1}k)$ \COMMENT{by applying Theorem~\ref{th:compNoise} with $\varepsilon=2$}
\STATE $\rho_j \leftarrow \rho(A_j - A_j^{tmp})$ \COMMENT{by applying Theorem~\ref{th:estimateNorm}}
\STATE let $I_\ell \subseteq I$ be the set of $\left\lceil m/2^{\ell} \right\rceil$ indices $j \in I$ with smallest $\rho_j$.
\STATE let $I \leftarrow I \setminus I_\ell$
\FORALL{$j \in I_\ell$} 
\STATE let $A_j^{alg} \leftarrow A_j^{tmp}$ \COMMENT{fix the columns $\{ A_j^{tmp} : j \in I_{\ell} \}$}
\ENDFOR 
\ENDFOR
\ENDFOR \label{l:mainEndWeak}
\STATE let $A^{fin}$ be the $km$-sparse matrix whose non-zeros entries are the $km$ heaviest entries of $A^{alg}$.
\RETURN $A^{fin}$.
\caption{Algorithm for Batch Reconstruction}
\label{alg:noiseWeak}
\end{algorithmic}
\end{algorithm}

Prior to analyzing Algorithm~\ref{alg:noiseWeak} in the next section, 
let us note that we can easily bound the number of times it invokes Theorems \ref{th:estimateNorm} and \ref{th:compNoise}, and show that with probability at least $1 - \frac{1}{n^{\Omega(1)}}$ all invocations succeed. We therefore condition on that event. 
For sake of simplicity, we assume that $m$ is a power of $2$.
The next claim follows by simple induction.
\begin{proposition}
For every $\ell \in [\log (2m)]$, at the beginning of the $\ell$th iteration, $|I| = \tfrac{m}{2^{\ell-1}}$.
\end{proposition}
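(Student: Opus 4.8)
The plan is to prove this proposition by straightforward induction on the iteration index $\ell$, tracking the size of the active index set $I$ across iterations. The key observation is that the algorithm removes a carefully chosen number of indices from $I$ in each iteration, and these removal counts form a geometric sequence that telescopes cleanly.

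First I would establish the base case. At the beginning of the first iteration ($\ell = 1$), no indices have been removed yet, so $I = [m]$ and hence $|I| = m = \tfrac{m}{2^{0}} = \tfrac{m}{2^{\ell-1}}$, as claimed. For the inductive step, I would assume that at the beginning of iteration $\ell$ we have $|I| = \tfrac{m}{2^{\ell-1}}$, and compute the size at the beginning of iteration $\ell+1$. During iteration $\ell$, the algorithm selects the set $I_\ell$ consisting of exactly $\lceil m/2^{\ell} \rceil$ indices and updates $I \leftarrow I \setminus I_\ell$. Under the simplifying assumption (already made in the excerpt) that $m$ is a power of $2$, the ceiling is superfluous and $|I_\ell| = m/2^{\ell}$. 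Therefore at the end of iteration $\ell$ (equivalently, the beginning of iteration $\ell+1$),
\[
  |I| = \frac{m}{2^{\ell-1}} - \frac{m}{2^{\ell}} = \frac{2m}{2^{\ell}} - \frac{m}{2^{\ell}} = \frac{m}{2^{\ell}} = \frac{m}{2^{(\ell+1)-1}},
\]
which completes the induction.

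The only subtlety worth checking is well-definedness: selecting $\lceil m/2^\ell \rceil$ indices from $I$ requires that $|I| \ge \lceil m/2^\ell \rceil$ at that point. By the inductive hypothesis $|I| = m/2^{\ell-1} = 2 \cdot m/2^{\ell} \ge m/2^\ell$, so there are always enough indices available, and the step is legitimate throughout the range $\ell \in [\log(2m)]$. I do not anticipate a genuine obstacle here; the statement is essentially a bookkeeping identity, and the main thing to be careful about is the indexing convention (``beginning of the $\ell$th iteration'' versus ``after the update''), together with the power-of-two assumption that lets us drop the ceiling. Since the excerpt explicitly grants that $m$ is a power of $2$, the argument is clean, and the claim follows by simple induction as stated.
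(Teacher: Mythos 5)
Your proof is correct and matches the paper's intent exactly: the paper states only that the proposition ``follows by simple induction,'' and your argument is precisely that induction, with the base case $|I|=m$ and the inductive step $\tfrac{m}{2^{\ell-1}} - \tfrac{m}{2^{\ell}} = \tfrac{m}{2^{\ell}}$ using the power-of-two assumption to drop the ceiling. The well-definedness check is a nice touch but not a departure from the paper's approach.
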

It follows that at the end of the last iteration of the main loop, $I = \emptyset$, and thus the output columns are all well-defined.

\subsection{Controlling the Noise and the Number of Measurements}

The main challenge is to bound the relative error by $O(\veps)$.
While the algorithm may seem very natural, the straightforward analysis incurs an extra factor of $\log m$ on the approximation factor, as follows.
By averaging, for every $\ell \in [\log (2m)]$, 
at most $\tfrac{m}{2^{\ell+1}}$ columns $A^*_j$ have 
$\|A^*_j\|_0 > 2^{\ell+1}k$, 
and at most $\tfrac{m}{2^{\ell+1}}$ columns $A^*_j$ have 
$\|A_j - A^*_j\|_1 > \tfrac{2^{\ell+1}}{m} \veps$. 
The total number of columns in these two groups is at most 
$2\cdot \tfrac{m}{2^{\ell+1}} = \tfrac12 \card{I}$, 
and thus at least $\ceil{\tfrac12 \card{I}} = \card{I_\ell}$ columns in $I$ 
are not in these two groups.
By the sparse recovery guarantees on these columns and the choice of $I_\ell$,
\begin{equation}
  \forall j \in I_\ell, \qquad
  \|A_j - A_j^{alg}\|_1 \le \frac{3 \cdot 2^{\ell+1}}{m} \veps .
\label{eq:simpleBound}
\end{equation}
Summing these over all values of $\ell$ we get that 
$\|A - A^{alg}\|_1 \le O( \veps \log m)$.

In order to improve this guarantee to $O(\veps)$, 
we need to use a more subtle argument, and tighten the bound in~\eqref{eq:simpleBound}.
We replace the term $\|A\|_1/m$, which represents the norm of an average column, 
with the norm of a specific column, 
and the crux is that they sum up (over all iterations $\ell$) very nicely,
because these summands correspond (essentially) to distinct columns. 

We additionally note that it actually suffices to prove that at the end of the execution, 
$\|A - A^{alg}\|_1 \le O(\veps)$. 
To see this, let $A^*$ denote a $(km)$-sparse matrix satisfying 
$\|A - A^*\|_1 = \veps$. Since $A^{fin}$ is the $km$-sparse matrix closest to $A^{alg}$, then $\|A^{alg} - A^{fin} \|_1 \le \|A^{alg} - A^* \|_1$.
Using these bounds and (twice) the triangle inequality, we get
$$ 
  \|A - A^{fin}\|_1 
  \le \| A - A^{alg} \|_1 + \| A^{alg} - A^{fin} \|
  \le 2\|A - A^{alg}\|_1 + \veps \;.
$$

We start the analysis by first bounding the number of linear measurements performed by the algorithm.
\begin{lemma}\label{l:boundPerIter}
During the $\ell$th iteration of the main loop, Algorithm~\ref{alg:noiseWeak} performs $O(km \log n)$ linear measurements. 
\end{lemma}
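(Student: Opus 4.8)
The plan is to count the measurements contributed by each of the two black-box subroutines invoked during iteration $\ell$, and show each contributes $O(km\log n)$. First I would recall from the Proposition that at the beginning of iteration $\ell$ we have $\card{I} = m/2^{\ell-1}$, so the loop body processes exactly $m/2^{\ell-1}$ columns, and for \emph{each} such column $j \in I$ it makes two kinds of measurements: one call to the stable recovery scheme of Theorem~\ref{th:compNoise} with sparsity parameter $s = 2^{\ell+1}k$ and $\varepsilon = 2$ (to produce $A_j^{tmp}$), and one call to the norm-estimation scheme of Theorem~\ref{th:estimateNorm} (to produce $\rho_j$).

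Next I would tally each kind. By Theorem~\ref{th:compNoise} with $\varepsilon$ a fixed constant, a single column recovery at sparsity $s = 2^{\ell+1}k$ costs $O(s \log n) = O(2^{\ell+1} k \log n)$ measurements (the $\varepsilon$-dependent factor $\log^3(1/\varepsilon)/\sqrt{\varepsilon}$ is an absolute constant once $\varepsilon = 2$ is fixed). Summed over all $\card{I} = m/2^{\ell-1}$ columns, the $2^{\ell+1}$ in the per-column bound cancels against the $2^{\ell-1}$ in the denominator of $\card{I}$, leaving
\[
  \frac{m}{2^{\ell-1}} \cdot O(2^{\ell+1} k \log n) = O(m \cdot 4 k \log n) = O(km \log n).
\]
This cancellation is the crux of the lemma: the sparsity budget per column grows geometrically in $\ell$, but the number of surviving columns shrinks at exactly the same geometric rate, so the product is independent of $\ell$. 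The norm-estimation calls are cheaper: by Theorem~\ref{th:estimateNorm} each costs $O(\log n)$, and summed over the $m/2^{\ell-1} \le m$ columns this is $O(m \log n)$, which is dominated by the $O(km \log n)$ recovery cost since $k \ge 1$.

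There is no genuine obstacle here—the statement is essentially an exercise in bookkeeping—so the proof is a short direct calculation rather than a subtle argument. The one point warranting care is that Theorem~\ref{th:compNoise} and Theorem~\ref{th:estimateNorm} each operate on a single column via its own sensing matrix, so the measurements are counted per column and the two subroutines do not share measurements; adding the two contributions gives $O(km\log n) + O(m \log n) = O(km \log n)$, which is the claimed bound for one iteration.
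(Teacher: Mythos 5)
Your proof is correct and follows the same route as the paper's: the per-column cost $O(2^{\ell+1}k\log n)$ from Theorem~\ref{th:compNoise} multiplied by $\card{I}=m/2^{\ell-1}$ surviving columns cancels the geometric factors to give $O(km\log n)$. The only difference is that you also explicitly account for the $O(\log n)$ norm-estimation measurements per column, which the paper leaves implicit; this is a harmless (and slightly more careful) addition.
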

\begin{proof} 
At the beginning of the $\ell$th iteration, $|I| \le \tfrac{m}{2^{\ell-1}}$. For every $j \in I$, the algorithm performs $O(2^{\ell + 1} k \log n)$ linear measurements on $A_j$. 
Thus the total number of measurements performed during  the $\ell$th iteration is $O(2^{\ell + 1} k \log n)\cdot |I| \le O(mk \log n)$. 
\end{proof}
The algorithm performs $\log (2m)$ iterations, and thus 
the total number of linear measurements performed throughout the execution is at most $O(km \log n \log m)$.
It remains to show that $\|A - A^{alg}\|_1 \le O(\veps)$. 

To this end, denote for every $j \in [m]$, $\veps_j = \|A_j - A^*_j\|_1$, and note that $\sum_{j \in [m]}{\veps_j} = \veps$. Let $\veps_{(1)} \ge \veps_{(2)} \ge \ldots \ge \veps_{(m)}$ be a non-increasing ordering of $\{\veps_j\}_{j \in [m]}$.
\begin{lemma}
For every $\ell \in [\log (m/2)]$ and $j \in I_\ell$, $\rho_j \le 6 \veps_{(2^{-\ell-1}m)}$.
\end{lemma}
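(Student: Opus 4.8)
The plan is to identify, at each iteration $\ell$, a large enough collection of ``good'' columns in $I$ on which the sparse-recovery estimate is provably accurate, and then exploit the fact that $I_\ell$ selects precisely the columns of smallest $\rho_j$. Call a column $j \in [m]$ \emph{good} (for iteration $\ell$) if it satisfies both $\|A_j^*\|_0 \le 2^{\ell+1}k$ and $\veps_j \le \veps_{(2^{-\ell-1}m)}$; note this labeling depends only on the fixed optimal matrix $A^*$, and not on the algorithm's random choices.

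First I would establish the per-column bound for a good column $j$. Since $A_j^{tmp} = A_j'(2^{\ell+1}k)$ is produced by Theorem~\ref{th:compNoise} with sparsity $s = 2^{\ell+1}k$ and $\varepsilon = 2$, we have $\|A_j - A_j^{tmp}\|_1 \le 3 \min_{s\text{-sparse } x^*}\|A_j - x^*\|_1$. Because $j$ is good, $A_j^*$ is itself $s$-sparse, so the minimum is at most $\|A_j - A_j^*\|_1 = \veps_j \le \veps_{(2^{-\ell-1}m)}$. Combining with the norm-estimation guarantee $\rho_j \le 2\|A_j - A_j^{tmp}\|_1$ from Theorem~\ref{th:estimateNorm} yields $\rho_j \le 6\veps_{(2^{-\ell-1}m)}$, which is exactly the claimed bound (recall $2^{-\ell-1}m = m/2^{\ell+1}$).

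Next I would count the good columns remaining in $I$. By a Markov/averaging argument on $\sum_j \|A_j^*\|_0 \le km$, at most $m/2^{\ell+1}$ columns violate the sparsity condition; and by the very definition of the order statistic $\veps_{(m/2^{\ell+1})}$, fewer than $m/2^{\ell+1}$ columns can have $\veps_j > \veps_{(m/2^{\ell+1})}$. Hence the total number of bad columns over all of $[m]$ is strictly less than $m/2^\ell = |I_\ell|$. Since $|I| = m/2^{\ell-1} = 2|I_\ell|$ at the start of iteration $\ell$, even if every bad column happens to lie in $I$ there remain more than $|I| - |I_\ell| = |I_\ell|$ good columns inside $I$. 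Finally, because each such good column has $\rho_j \le 6\veps_{(2^{-\ell-1}m)}$ while $I_\ell$ consists of the $|I_\ell|$ indices of smallest $\rho_j$ in $I$, the $|I_\ell|$-th smallest value of $\rho_j$ is at most the threshold, so every $j \in I_\ell$ satisfies $\rho_j \le 6\veps_{(2^{-\ell-1}m)}$.

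The only thing to get right is the bookkeeping in the counting step, specifically the exact match $|I| = 2|I_\ell|$ that lets the two groups of bad columns (each of size at most $m/2^{\ell+1}$) leave behind enough good columns; one should also check that for $\ell \le \log(m/2)$ the index $m/2^{\ell+1}\ge 1$ so that $\veps_{(2^{-\ell-1}m)}$ is well defined. I expect this to be routine rather than a genuine obstacle. The real payoff of replacing the crude average bound $\tfrac{2^{\ell+1}}{m}\veps$ used in~\eqref{eq:simpleBound} with the order statistic $\veps_{(2^{-\ell-1}m)}$ does not appear here, but only later, when these per-iteration bounds are summed: the geometric indices make $\sum_\ell |I_\ell|\cdot\veps_{(2^{-\ell-1}m)}$ telescope to $O(\veps)$ rather than $O(\veps\log m)$.
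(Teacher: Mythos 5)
Your proposal is correct and is essentially the paper's own argument, merely phrased in contrapositive form: the paper shows that any $j\in I$ with $\rho_j>6\veps_{(2^{-\ell-1}m)}$ must violate either the sparsity bound $\|A_j^*\|_0\le 2^{\ell+1}k$ or the order-statistic bound $\veps_j\le\veps_{(2^{-\ell-1}m)}$, and then uses the same two counting estimates (Markov on $\sum_j\|A_j^*\|_0\le km$ and the definition of the order statistic) together with the fact that $I_\ell$ selects the $\tfrac12|I|$ smallest values of $\rho_j$. The per-column derivation via Theorems~\ref{th:compNoise} and~\ref{th:estimateNorm} and the final selection step match the paper exactly.
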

\begin{proof}
Fix some $\ell \in [ \log (m/2) ]$, and consider the set $I$ at the beginning of the $\ell$th iteration. 
Recall that $|I| = m/2^{\ell-1}$, and $I_\ell$ is the set of $m/2^{\ell}$ indices $j \in I$ with smallest $\rho_j$. Observe that to prove the claim, it is enough to show that 
$$\Pr_{j \in I}\left[ \rho_j > 6 \veps_{(2^{-\ell-1}m)} \right] \le \frac{1}{2} \;.$$
To this end, consider an arbitrary $j \in I$ with $\rho_j > 6 \veps_{(2^{-\ell-1}m)}$. If, in addition, $\|A_j^*\|_0 \le 2^{\ell+1}k$, then by Theorem~\ref{th:compNoise}
$$\|A_j - A_j^{tmp}\|_1 = \|A_j - A_j'(2^{\ell + 1}k)\|_1 \le 3 \min\limits_{x^* \;\; 2^{\ell+1}k-sparse}\|A_j - x^*\|_1 \le 3 \|A_j - A_j^*\|_1 = 3\veps_j\;.$$
By Theorem~\ref{th:estimateNorm}, $\|A_j - A_j^{tmp}\|_1 \ge \tfrac{1}{2}\rho(A_j - A_j^{tmp})=\tfrac{1}{2}\rho_j$, and therefore 
$$\veps_j \ge \frac{1}{3} \|A_j - A_j^{tmp}\|_1  \ge \frac{1}{6}\rho_j > \veps_{(2^{-\ell-1}m)}\;.$$ We conclude that for every $j \in I$, if 
$\rho_j > 6 \veps_{(2^{-\ell-1}m)}$, then either $\|A_j^*\|_0 > 2^{\ell+1}k$ or $\veps_j > \veps_{(2^{-\ell-1}m)}$.
By definition, $\veps_j > \veps_{(2^{-\ell-1}m)}$ occurs for at most $\frac{m}{2^{\ell+1}}$ indices $j \in [m]$.
In addition, since $\mathbb{E}_{j \in [m]}[\|A^*_j\|_0] = k$, then at most $\frac{m}{2^{\ell+1}}$ indices $j \in [m]$ satisfy $\|A_j^*\|_0 > 2^{\ell+1}k$. Thus at most $\tfrac{m}{2^{\ell}} = \tfrac{1}{2}|I|$ indices $j \in [m]$ satisfy $\rho_j > 6 \veps_{(2^{-\ell-1}m)}$.
The claim follows.
\end{proof}
\begin{lemma}
For every $\ell \in [\log (m/2)]$, $$\sum_{j \in I_\ell}{\|A_j - A_j^{alg}\|_1} \le 24 \sum_{j \in \left[\frac{m}{2^{\ell+1}}, \frac{m}{2^{\ell}} -1\right]}{\veps_{(j)}} \;.$$
\end{lemma}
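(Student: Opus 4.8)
The plan is to reduce everything to a bound on $\sum_{j \in I_\ell}\rho_j$ and then convert the per-column estimates $\rho_j$ into a sum of \emph{order statistics} $\veps_{(r)}$ whose indices land exactly in the target window. First I would record that for $j \in I_\ell$ we have $A_j^{alg}=A_j^{tmp}$, so by the lower bound in Theorem~\ref{th:estimateNorm}, $\|A_j-A_j^{alg}\|_1 = \|A_j-A_j^{tmp}\|_1 \le 2\rho_j$. Hence it suffices to prove $\sum_{j\in I_\ell}\rho_j \le 12\sum_{r\in[\frac{m}{2^{\ell+1}},\,\frac{m}{2^\ell}-1]}\veps_{(r)}$, after which the factor $2$ yields the claimed $24$.

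The heart of the argument is a rank-by-rank refinement of the preceding lemma. Call $j$ \emph{good} if $\|A_j^*\|_0\le 2^{\ell+1}k$. For good $j$, Theorem~\ref{th:compNoise} (with $\varepsilon=2$) together with Theorem~\ref{th:estimateNorm} gives $\rho_j\le 2\|A_j-A_j^{tmp}\|_1\le 6\veps_j$, while Markov's inequality applied to $\mathbb{E}_{j}[\|A_j^*\|_0]=k$ shows there are at most $\tfrac{m}{2^{\ell+1}}$ bad indices in all of $[m]$. Consequently, for any threshold $a$ the number of $j\in I$ with $\rho_j>6\veps_{(a)}$ is at most $(a-1)+\tfrac{m}{2^{\ell+1}}$: a good such $j$ forces $\veps_j>\veps_{(a)}$ (at most $a-1$ indices), and the remainder are bad. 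Now order $\{\rho_j:j\in I_\ell\}$ non-increasingly as $\rho^{(1)}\ge\cdots\ge\rho^{(n_\ell)}$ with $n_\ell=|I_\ell|=\tfrac{m}{2^\ell}$. Because $I_\ell$ is the set of the $n_\ell$ \emph{smallest} $\rho$-values in $I$ and $|I|=2n_\ell$, at least $n_\ell+s$ indices of $I$ satisfy $\rho_j\ge\rho^{(s)}$; combining this with the counting bound forces $\rho^{(s)}\le 6\veps_{(\frac{m}{2^{\ell+1}}+s)}$ for every $s$.

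Summing this rank bound over $s=1,\dots,n_\ell$ gives $\sum_{j\in I_\ell}\rho_j\le 6\sum_{r=\frac{m}{2^{\ell+1}}+1}^{\frac{3m}{2^{\ell+1}}}\veps_{(r)}$, whose index window has length $\tfrac{m}{2^\ell}$, exactly twice the target window. To close this gap I would split the window into its lower half $[\frac{m}{2^{\ell+1}}+1,\frac{2m}{2^{\ell+1}}]$ and upper half $[\frac{2m}{2^{\ell+1}}+1,\frac{3m}{2^{\ell+1}}]$ and use monotonicity of $r\mapsto\veps_{(r)}$: the lower half is bounded by the target sum after shifting each index down by one, and the upper half is bounded by the target sum since it has the same number of terms, each no larger than the smallest target term. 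Adding the two halves supplies the factor $2$, giving $\sum_{j\in I_\ell}\rho_j\le 12\sum_{r\in[\frac{m}{2^{\ell+1}},\,\frac{m}{2^\ell}-1]}\veps_{(r)}$, as required.

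The step I expect to be the crux is the per-rank estimate $\rho^{(s)}\le 6\veps_{(\frac{m}{2^{\ell+1}}+s)}$. The preceding lemma supplies only the uniform bound $\rho_j\le 6\veps_{(2^{-\ell-1}m)}$ for all $j\in I_\ell$, which is too weak: it bounds the left-hand side by $12\cdot\tfrac{m}{2^\ell}\veps_{(2^{-\ell-1}m)}$, and monotonicity then runs the wrong way against the target sum. Extracting a \emph{separate, decreasing} bound for each order statistic is precisely what makes the summands correspond to distinct indices $\veps_{(r)}$ and, across iterations, telescope to $24\veps$. A minor technical point I would treat carefully is ties among the $\rho_j$ (and among the $\veps_j$), which I resolve by fixing a consistent tie-breaking rule so that the counting inequalities hold in the stated strict/non-strict forms.
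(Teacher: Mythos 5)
Your proof is correct, and it is worth noting that it does \emph{not} follow the paper's own argument --- in fact it repairs it. The paper's proof applies the preceding lemma's uniform bound $\rho_j \le 6\veps_{(2^{-\ell-1}m)}$ to every $j \in I_\ell$ and then asserts $12\cdot\tfrac{m}{2^{\ell}}\cdot\veps_{(2^{-\ell-1}m)} \le 24\sum_{j\in[\frac{m}{2^{\ell+1}},\frac{m}{2^{\ell}}-1]}\veps_{(j)}$. As you anticipated, this last step runs against the monotonicity of the order statistics: the window lies to the \emph{right} of the index $\tfrac{m}{2^{\ell+1}}$, so each of its $\tfrac{m}{2^{\ell+1}}$ terms is at most $\veps_{(2^{-\ell-1}m)}$, and the inequality fails in general (e.g., if $\veps_{(j)}=0$ for all $j>\tfrac{m}{2^{\ell+1}}$, the right-hand side collapses to $24\veps_{(2^{-\ell-1}m)}$ while the left-hand side is $12\cdot\tfrac{m}{2^{\ell}}\cdot\veps_{(2^{-\ell-1}m)}$). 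Your rank-by-rank estimate $\rho^{(s)}\le 6\veps_{(\frac{m}{2^{\ell+1}}+s)}$ --- obtained by playing the count of at most $(a-1)+\tfrac{m}{2^{\ell+1}}$ indices of $I$ with $\rho_j>6\veps_{(a)}$ against the fact that at least $n_\ell+s$ indices of $I$ have $\rho_j\ge\rho^{(s)}$ --- is exactly the ``summands correspond to distinct columns'' argument that the paper's prose promises but its displayed proof does not actually deliver, and it is what makes the lemma true with the stated (disjoint) index windows; the subsequent window-halving and index shift are sound and give the constant $24$. The only caveats are cosmetic: you should note that $\tfrac{m}{2^{\ell+1}}+s\le\tfrac{3m}{2^{\ell+1}}\le m$ so the order statistics you invoke are defined, and the tie-breaking concern you raise is harmless because your counting inequalities already use the correct mix of strict and non-strict comparisons.
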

\begin{proof}
Fix some $\ell \in [\log (m/2)]$, and let $j \in I_\ell$. Then $A_j^{alg}$ was fixed in the $\ell$th iteration, and thus by the previous claim, $\|A_j - A_j^{alg}\|_1 \le 2 \rho(A_j - A_j^{alg}) = 2 \rho_j \le 12 \veps_{(2^{-\ell-1}m)}$. Since $|I_\ell| = \tfrac{m}{2^\ell}$ and $\{\veps_{(j)}\}_{j \in [m]}$ is non-increasing
$$\sum_{j \in I_\ell}{\|A_j - A_j^{alg}\|_1} \le 12 \cdot \tfrac{m}{2^\ell} \cdot \veps_{(2^{-\ell-1}m)} \le 12 \cdot 2 \sum_{j \in \left[\frac{m}{2^{\ell+1}}, \frac{m}{2^{\ell}} - 1\right]}{\veps_{(j)}} \;.$$
\end{proof}
\begin{corollary} \label{cor:noiseBound}
$\|A - A^{alg}\|_1 \le O(\veps)$.
\end{corollary}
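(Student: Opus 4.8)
The plan is to sum the per-iteration estimate of the preceding lemma over all iterations and exploit the fact that the index ranges on its right-hand side are pairwise disjoint, so that the total telescopes to a constant multiple of $\veps$. First I would observe that at termination $I = \emptyset$, hence the sets $\{I_\ell\}_{\ell}$ partition $[m]$ and
\[
  \|A - A^{alg}\|_1 = \sum_{\ell=1}^{\log(2m)} \sum_{j \in I_\ell} \|A_j - A_j^{alg}\|_1 .
\]
I would then split this into the iterations $\ell \le \log(m/2)$ that the lemma covers and the final two iterations $\ell \in \{\log m, \log(2m)\}$, each of which fixes a single column.

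For the first part, the key point is that the range $\left[\frac{m}{2^{\ell+1}}, \frac{m}{2^{\ell}}-1\right]$ for iteration $\ell$ and the corresponding range for iteration $\ell+1$ are consecutive and disjoint; as $\ell$ runs over $1,\ldots,\log(m/2)$ their union is exactly $\{1,\ldots,\tfrac m2-1\}$. Summing the lemma's bound and using that $\{\veps_{(j)}\}$ is a reordering of $\{\veps_j\}$ with $\sum_j \veps_j = \veps$, I obtain
\[
  \sum_{\ell=1}^{\log(m/2)} \sum_{j \in I_\ell} \|A_j - A_j^{alg}\|_1
  \le 24 \sum_{j=1}^{m/2-1} \veps_{(j)}
  \le 24\veps .
\]
This is exactly where the refined lemma pays off: the naive estimate~\eqref{eq:simpleBound} would instead add a fixed quantity $\log m$ times, incurring the extra logarithmic factor.

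It remains to handle the two columns fixed in iterations $\ell = \log m$ and $\ell = \log(2m)$, for which the lemma's range degenerates. Here the sparsity parameter satisfies $2^{\ell+1}k \ge 2mk \ge km \ge \|A_j^*\|_0$, so Theorem~\ref{th:compNoise} applied with this parameter recovers $A_j$ essentially optimally, giving $\|A_j - A_j^{alg}\|_1 = \|A_j - A_j'(2^{\ell+1}k)\|_1 \le 3\veps_j \le 3\veps$ for each such $j$. Adding the contribution of these $O(1)$ tail columns to the telescoped bound yields $\|A - A^{alg}\|_1 \le 27\veps = O(\veps)$.

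The quantitative heart of the argument — replacing the average-column norm $\|A\|_1/m$ by the norms of specific, essentially distinct columns — has already been done in the preceding lemma, so for this corollary I expect the only delicate point to be the bookkeeping: checking that the ranges tile $\{1,\ldots,\tfrac m2-1\}$ without overlap so that the sum genuinely telescopes, and verifying that precisely the two tail columns fall outside the lemma's reach and hence must be bounded directly via the sparse-recovery guarantee.
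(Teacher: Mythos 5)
Your proof is correct and follows essentially the same route as the paper: decompose $[m]$ into the disjoint sets $I_\ell$, sum the preceding lemma over $\ell \le \log(m/2)$ using the disjointness of the index ranges $\left[\frac{m}{2^{\ell+1}}, \frac{m}{2^{\ell}}-1\right]$ to get $O(1)\cdot\sum_j \veps_{(j)} = O(\veps)$, and bound the $O(1)$ tail columns directly via Theorem~\ref{th:compNoise} with sparsity parameter $\ge km$. The only discrepancies are immaterial constant bookkeeping (e.g., your $27\veps$ versus the paper's $9\veps + 24\veps$, and the exact count of tail columns), which do not affect the $O(\veps)$ conclusion.
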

\begin{proof}
Since $\bigcup_{\ell \in [\log (2m)]}I_\ell = [m]$, and the sets $\{I_\ell\}_{\ell \in \log (2m)}$ are pairwise disjoint, 
then $$\|A - A^{alg}\|_1 = \sum_{\ell = 1}^{\log (2m)}{\sum_{j \in I_\ell}{\|A_j - A_j^{alg}\|_1}} = \sum_{j \in I_{\log (2m)}\cup I_{\log m}}{\|A_j - A_j^{alg}\|_1} + \sum_{\ell = 1}^{\log m}{\sum_{j \in I_\ell}{\|A_j - A_j^{alg}\|_1}} \;.$$
Observe first that $|I_{\log (2m)} \cup I_{\log m}| = 3$.  Moreover, for every $j \in I_{\log (2m)} \cup I_{\log m}$, $A_{j}^{alg}$ is constructed by applying standard sparse recovery on $A_j$ with sparsity parameter $ \ge mk$. Since $\|A^*_{j}\|_0 \le km$, then similarly to the previous proof we get that $\|A_{j} - A^{alg}_{j}\|_1 \le 3 \veps_{j}  \le 3 \veps $.
Therefore by the previous lemma 
\begin{equation*}
\|A - A^{alg}\|_1 \le 9 \veps  + 24\sum_{\ell = 1}^{\log m}{\sum_{j \in \left[\frac{m}{2^{\ell+1}}, \frac{m}{2^{\ell}} - 1\right]}{\veps_{(j)}}} \le 9 \veps  + 24\sum_{j=1}^{m}{\veps_{(j)}} = O(\veps) \;.
\end{equation*}
\end{proof}
Theorem~\ref{th:mainWeak} follows from Lemma~\ref{l:boundPerIter} and Corollary~\ref{cor:noiseBound}.

\section{$(1+\varepsilon)$-Approximation Batch Reconstruction} \label{sec:mainProof}

In this section we present an algorithm that, given access
to $A_1,\ldots,A_m$ via linear measurements, in addition to a parameter
$k \in \mathbb{N}$ and $\varepsilon > 0$, constructs
vectors $A_1^{alg},\ldots,A_m^{alg}$ satisfying $\|A - A^{alg}\|_1 \le (1+\varepsilon)\veps$. 
We will additionally show that the algorithm performs a total of at most $\tilde{O}(\varepsilon^{-3/2}km
\log n)$ linear measurements. 

\subsection{High Level Description}
In the beginning of the execution, for every $j \in [m]$, the algorithm invokes Algorithm~\ref{alg:noiseWeak} on $A$ in order to construct a matrix $A^{init}$ satisfying $\|A - A^{init} \|_1 \le O(1) \cdot \veps$. For every $j \in [m]$, the vector $D_j$ is shorthand for $A_j - A_j^{init}$ the vector of residual entries of $A_j$, not yet recovered by the algorithm. As previously noted, since we can access $A_j$ via linear queries, we can perform linear measurements on $A_j - v$ for every known $v \in \mathbb{R}^n$. Specifically, we can perform linear measurements on $D_j$.

Next, the algorithm constructs new vectors $A_1^{tmp}, \ldots, A_m^{tmp}$, where each $A_j^{tmp}$ is essentially an estimation of the heaviest entries of $D_j$, as in Theorem~\ref{th:compNoise}. We note that this step uses the construction algorithm in Theorem~\ref{th:compNoise} as ``black box'', and any other stable sparse recovery result, can be used here. 
The exact number of entries in question is determined by the algorithm for each $j \in [m]$ separately, as will be described shortly. For every $j \in [m]$, $A_j^{alg}$ is then defined as $A_j^{init} + A_j^{tmp}$, and the algorithm returns $A^{alg}_1,\ldots,A^{alg}_m$. By carefully constructing $A_1^{tmp}, \ldots, A_m^{tmp}$ we show that the quantity $\|A - A^{alg}\|_1$ can be made arbitrarily close to $\veps$.

\paragraph{Constructing $A^{tmp}$ .} 
The key challenge is constructing $A_1^{tmp}, \ldots, A_m^{tmp}$ 
while not exceeding the budget, i.e. number of linear measurements too much. 
For this purpose we use a subtle bucketing on the columns 
with respect to the estimated $\ell_1$ norm of each column. 
The algorithm then ``invests'' an appropriate amount of measurements 
for each bucket, divided equally between the columns mapped to the respective bucket. 

In order to gain intuition, let us consider the special case of the problem where $\|A_j - A_j^{init}\|_1 = 1$ for all $j \in [m]$, 
thus initially $m = \|A - A^{init}\|_1 = C\veps$ for some $C>1$. 
Let $A^*$ denote a $(km)$-sparse matrix satisfying $\|A - A^*\|_1 \le \veps$. 
Let $J = \{j \in [m] : \|A^*_j\|_0 \le k/\varepsilon\}$ be the set of $(k/\varepsilon)$-sparse columns of $A^*$. Since $A^*$ is $(km)$-sparse, 
a straightforward averaging argument yields that $|J| \ge (1-\varepsilon)m$. 
Following Theorem~\ref{th:compNoise}, by performing $\tilde{O}(\varepsilon^{-3/2}k \log n)$ linear measurements 
on each column, we can find vectors $A'_1,\ldots,A'_m$ satisfying 
$\|A_j - A'_j\|_1 \le (1+\varepsilon) \veps_j$, 
where $\veps_j = \min\limits_{\hat{A}_j \; (k/\varepsilon)-sparse}\|A_j - \hat{A}_j\|_1$. 
Note that in particular this means that for every $j \in J$, $\veps_j \le \|A_j - A^*_j\|_1$. 
Therefore, by letting $A^{alg}_j = A'_j$ for all $j \in J$ and $A^{alg}_j = A^{init}_j$ otherwise, we get
\begin{equation*}
\begin{split}
\|A- A^{alg}\| =  \sum_{j \in J}{\|A_j - A'_j\|_1} + \sum_{j \notin J}{\|A_j - A^{init}_j\|_1} 
 \le (1+\varepsilon) \sum_{j \in J}{\|A_j - A^*_j\|_1} + \varepsilon m \le (1+O(\varepsilon))\veps \;,
\end{split}
\end{equation*}
where the inequality before last is due to the fact that 
$\|A_j - A_j^{init}\|_1 = 1$ for all $j \in [m]$ and $|J| \ge (1-\varepsilon)m$, and the last inequality follows since $m = O(\veps)$.

In general, however, we cannot assume that all columns have the same $\ell_1$ norm (note that the value $1$ in the example above is arbitrary), even up to a constant. 
In order to implement the demonstrated approach we bucket the columns in such a way that columns that are in the same bucket have approximately the same $\ell_1$ norm. The algorithm performs a total of $\tilde{O}(\varepsilon^{-3/2}mk \log n)$ measurements in each bucket. The number of measurements is equally divided by the columns in each bucket. In addition, by employing a subtle charging scheme, we show that the number of buckets is not too large (in fact, it is at most $O(\log (m/\varepsilon))$).
The algorithm is described in detail as Algorithm~\ref{alg:noise}.

\begin{algorithm}[th]
\begin{algorithmic}[1]
\FORALL{$j \in [m]$}
\STATE let $A^{init}$ be the result of applying Algorithm~\ref{alg:noiseWeak} on $A$. 
\STATE $D_j \leftarrow A_j - A^{init}_j$ \hspace{2pc} \COMMENT{implicitly}
\ENDFOR
\STATE let $M = \max_{j \in [m]}\varrho(D_j)$.
\FORALL{$i \in [\log (m/\varepsilon)]$}
\STATE let ${\cal E}_i \leftarrow \{j \in [m] : 2^{-i}M < \varrho(D_j) \le 2^{-i+1}M  \}$.
\ENDFOR
\FORALL{$j \in \bigcup_i{\cal E}_i$}
\STATE let $w(j)$ be the unique $i$ such that $j \in {\cal E}_i$. \label{l:unique}
\STATE let $A^{tmp}_j \leftarrow \left(D^{init}_j\right)'\left(\frac{100mk}{\varepsilon|{\cal E}_{w(j)}|}\right)$ as in Theorem~\ref{th:compNoise}. \label{l:construct}
\STATE let $A^{alg}_j \leftarrow A^{init}_j + A^{tmp}_j$
\ENDFOR
\RETURN $A^{alg}_1, \ldots, A^{alg}_n$.
\caption{Batch Reconstruction}
\label{alg:noise}
\end{algorithmic}
\end{algorithm}

We note that by adding one more adaptive round to the constructions of $A^{init}$ and $A^{tmp}$, we may assume without loss of generality that in addition $A^{init}$ agrees with $A$ and $A^{tmp}$ agrees with $D$. 

\subsection{Controlling the Noise and the Number of Measurements}
This section is devoted to the proof of Theorem~\ref{th:main}. We first note, that we can bound the number of times the algorithm invokes Theorem~\ref{th:estimateNorm} in order estimate the $\ell_1$ norm of a vector, and show that with probability at least $1 - \frac{1}{n^{\Omega(1)}}$ all invocations succeed. We therefore condition on that event.

We can now turn to analyze the algorithm and prove Theorem~\ref{th:main}. 
Let ${\cal H} \eqdef \{j \in [m] : \varrho(D_j) > \varepsilon M/m\}$ denote the set of the ``heavier'' columns of $D$. Whenever $j \notin {\cal H}$, the algorithm sets $A_j^{alg}$ to be simply $A_j^{init}$. For every $j \in {\cal H}$, the algorithm (line \ref{l:construct}) sets $A_j^{tmp}$ to be $D_j'\left(\frac{100mk}{\varepsilon|{\cal E}_{w(j)}|}\right)$, and sets $A_j^{alg} = A_j^{init} + A_j^{tmp}$.
The following claim gives a bound on the number of linear measurements performed after initialization.

\begin{claim}\label{c:boundPerIter}
Algorithm~\ref{alg:noise} performs $\tilde{O}(\varepsilon^{-3/2}mk\log n)$ linear measurements.
\end{claim}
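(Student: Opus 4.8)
The plan is to split the measurement count into three disjoint contributions and bound each separately. The first is the cost of the initialization step, in which $A^{init}$ is produced by a single invocation of Algorithm~\ref{alg:noiseWeak}; by Theorem~\ref{th:mainWeak} this accounts for $O(km\log n\log m) = \tO(km\log n)$ measurements. The second is the cost of estimating $\varrho(D_j)$ for every $j \in [m]$ (needed to compute $M$ and to form the buckets $\mathcal{E}_i$); by Theorem~\ref{th:estimateNorm} each such estimate costs $O(\log n)$ measurements, for a total of $O(m\log n)$. Both of these are dominated by the target bound, so the whole analysis reduces to the third contribution, namely the measurements spent constructing $A^{tmp}_j$ on line~\ref{l:construct} for the heavy columns $j \in \mathcal{H} = \bigcup_i \mathcal{E}_i$.

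For this third term I would argue bucket by bucket. Fix a bucket index $i \in [\log(m/\varepsilon)]$. Every column $j \in \mathcal{E}_i$ is recovered by Theorem~\ref{th:compNoise} with sparsity parameter $s_j = \frac{100mk}{\varepsilon|\mathcal{E}_i|}$ and approximation parameter $\varepsilon$, which costs $O\!\left(\frac{\log^3(1/\varepsilon)}{\sqrt{\varepsilon}}\, s_j \log n\right) = \tO\!\left(\varepsilon^{-3/2}\,\frac{mk}{|\mathcal{E}_i|}\log n\right)$ measurements. The crucial observation is that the factor $|\mathcal{E}_i|$ in the denominator cancels exactly against the number of columns in the bucket: summing over all $j \in \mathcal{E}_i$ gives $|\mathcal{E}_i| \cdot \tO\!\left(\varepsilon^{-3/2}\,\frac{mk}{|\mathcal{E}_i|}\log n\right) = \tO(\varepsilon^{-3/2}mk\log n)$, independently of the (unknown) bucket size. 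This is precisely the point of dividing the per-bucket budget equally among its columns, and it is the heart of the argument.

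It then remains to sum over the buckets. Since there are at most $\log(m/\varepsilon)$ of them, the total over all heavy columns is $O(\log(m/\varepsilon)) \cdot \tO(\varepsilon^{-3/2}mk\log n) = \tO(\varepsilon^{-3/2}mk\log n)$, the logarithmic factor in the number of buckets being absorbed into the $\tO$ notation. Combining this with the initialization and norm-estimation costs established in the first paragraph yields the claimed bound.

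The only genuine subtlety I anticipate is a sanity check that each sparsity parameter $s_j$ is meaningful, i.e.\ $s_j \ge 1$ so that Theorem~\ref{th:compNoise} applies verbatim: since $|\mathcal{E}_i| \le m$ always, we have $s_j \ge 100k/\varepsilon \ge 1$, so no rounding issue arises. Everything else is bookkeeping, and the per-bucket cancellation of the second paragraph carries the entire estimate.
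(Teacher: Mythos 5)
Your proposal is correct and follows essentially the same route as the paper's proof: the heart of both arguments is the per-bucket cancellation of $|\mathcal{E}_i|$ against the per-column sparsity parameter $\frac{100mk}{\varepsilon|\mathcal{E}_i|}$, followed by a sum over the $O(\log(m/\varepsilon))$ buckets absorbed into the $\tO$ notation. Your additional accounting for the initialization and norm-estimation costs is harmless extra bookkeeping that the paper handles elsewhere.
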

\begin{proof} 
Let $i \in [\log (m/\varepsilon)]$, and let $j \in {\cal E}_i$. Then $A_j^{tmp} \eqdef D'_j\left(\frac{100mk}{\varepsilon|{\cal E}_i|}\right)$. By Theorem~\ref{th:compNoise} the number of linear queries performed to construct $A_j^{tmp}$ is $O\left(\varepsilon^{-3/2}\frac{mk}{|{\cal E}_i|} \log^3(1/\varepsilon)\log n \right)$. Therefore the total number of queries performed on all $j \in {\cal E}_i$ is at most $O\left(\varepsilon^{-3/2}mk \log^3(1/\varepsilon)\log n \right)$, and thus, the total number of linear queries performed to construct $\{A_j^{tmp}\}_{j \in {\cal H}}$ is $O(\varepsilon^{-3/2}mk \log^3(1/\varepsilon)\log n \log (m/\varepsilon))$. 
\end{proof}
The rest of this section is devoted to the proof of the following lemma, which implies Theorem~\ref{th:main}.
\begin{lemma} \label{l:1PlusEps}
$\|A-A^{alg}\|_1 \le (1+O(\varepsilon))\veps$.
\end{lemma}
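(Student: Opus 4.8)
The plan is to split the error $\|A-A^{alg}\|_1 = \sum_{j}\|A_j - A_j^{alg}\|_1$ according to the partition of $[m]$ into the heavy set ${\cal H}$ and its complement, and within ${\cal H}$ according to the buckets ${\cal E}_i$. Fix a $(km)$-sparse matrix $A^*$ with $\|A-A^*\|_1 = \veps$, write $\veps_j \eqdef \|A_j - A_j^*\|_1$ (so $\sum_j \veps_j = \veps$), and recall from the remark preceding the algorithm that $A^{init}_j$ agrees with $A_j$ and $A^{tmp}_j$ agrees with $D_j$; in particular $A_j - A_j^{alg} = D_j - A_j^{tmp}$ throughout.

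First I would dispose of the light columns $j \notin {\cal H}$, where $A_j^{alg}=A_j^{init}$ and hence $\|A_j - A_j^{alg}\|_1 = \|D_j\|_1 \le 2\varrho(D_j) \le 2\varepsilon M/m$. Summing over at most $m$ columns gives $\sum_{j \notin {\cal H}}\|A_j-A_j^{alg}\|_1 \le 2\varepsilon M$, and since $M \le 2\max_j\|D_j\|_1 \le 2\|D\|_1 = 2\|A-A^{init}\|_1 = O(\veps)$ by Theorem~\ref{th:mainWeak}, this whole contribution is $O(\varepsilon\veps)$.

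The heavy columns are handled bucket by bucket. Write $s_i \eqdef \tfrac{100mk}{\varepsilon|{\cal E}_i|}$ for the per-column sparsity budget in bucket ${\cal E}_i$ (line~\ref{l:construct}), and call $j\in{\cal E}_i$ \emph{good} if $\|A_j^*\|_0 \le s_i$ and \emph{bad} otherwise. For a good column, let $T_j \eqdef [n]\setminus\supp(A_j^{init})$; since $A_j^{init}$ agrees with $A_j$, the residual satisfies $D_j = A_j|_{T_j}$, so the candidate $A_j^*|_{T_j}$ is $s_i$-sparse and obeys $\|D_j - A_j^*|_{T_j}\|_1 = \|(A_j - A_j^*)|_{T_j}\|_1 \le \veps_j$. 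Feeding this into the recovery guarantee of Theorem~\ref{th:compNoise} yields $\|A_j - A_j^{alg}\|_1 = \|D_j - A_j^{tmp}\|_1 \le (1+\varepsilon)\veps_j$, and summing over all good columns gives at most $(1+\varepsilon)\sum_j\veps_j = (1+\varepsilon)\veps$. For a bad column I use only the trivial bound: since $A_j^{tmp}$ agrees with $D_j$, the Claim in Section~2 gives $\|A_j - A_j^{alg}\|_1 = \|D_j - A_j^{tmp}\|_1 \le \|D_j\|_1 \le 2^{-i+2}M$.

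The crux is to show that the bad columns contribute only $O(\varepsilon\veps)$ in total, which is where the bucketing is used. Since $\sum_j\|A_j^*\|_0 \le km$, at most $km/s_i = \tfrac{\varepsilon|{\cal E}_i|}{100}$ columns of bucket $i$ can be bad; multiplying by the trivial per-column bound $2^{-i+2}M$ shows that bad columns in ${\cal E}_i$ contribute $O(\varepsilon\, |{\cal E}_i|2^{-i}M)$. The key quantitative observation is that all columns of ${\cal E}_i$ have nearly equal norm, $\|D_j\|_1 > 2^{-i-1}M$, so $|{\cal E}_i|2^{-i}M \le 2\sum_{j\in{\cal E}_i}\|D_j\|_1$; hence the bad contribution of bucket $i$ is charged to $O(\varepsilon)$ times the genuine $\ell_1$-mass $\sum_{j\in{\cal E}_i}\|D_j\|_1$ of that bucket. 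Summing over $i$ collapses the buckets into $O(\varepsilon)\|D\|_1 = O(\varepsilon)\|A-A^{init}\|_1 = O(\varepsilon\veps)$. Combining the light, good, and bad contributions gives $\|A-A^{alg}\|_1 \le (1+\varepsilon)\veps + O(\varepsilon\veps) = (1+O(\varepsilon))\veps$. The main obstacle is precisely this last charging step: without the bucketing one cannot afford the crude bound on the bad columns, and it is the near-uniformity of $\|D_j\|_1$ inside each bucket --- together with the fact that a bucket with many columns receives a correspondingly smaller budget $s_i$ and therefore has few bad columns --- that makes the losses telescope against $\|D\|_1$ rather than accumulate a spurious $\log(m/\varepsilon)$ factor.
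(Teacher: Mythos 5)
Your proposal is correct and follows essentially the same route as the paper: the same light/heavy split via $\mathcal{H}$, the same good/bad (paper: $\mathcal{X}$ vs.\ $\mathcal{Y}_i$) partition of each bucket by comparing $\|A_j^*\|_0$ to the per-column budget, the same Markov-style count of bad columns, and the same charging of their trivial $\ell_1$ bound against $\sum_{j\in\mathcal{E}_i}\|D_j\|_1$ using the near-uniformity of norms within a bucket. The only cosmetic difference is that you work directly with the restriction $A_j^*|_{T_j}$ where the paper introduces the auxiliary matrix $A^{**}$ with $\supp(A^{**}_j)=\supp(A^*_j)\setminus\supp(A^{init}_j)$ --- these are the same object.
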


We first show that the columns that the algorithm chooses not to ``invest'' in do not contribute much to $\|A - A^{alg}\|_1$. Intuitively, if the algorithm does not perform measurements on a column, that column is ``close'' to the corresponding column in $A$. 
By the definition of ${\cal H}$ we get that for every $j \notin {\cal H}$, $\|A_j-A^{alg}_j\|_1 = \|D_j\|_1 \le 2\varrho(D_j) \le 2\varepsilon M/m$. Therefore
$$\sum_{j \notin {\cal H}}{\|A-A^{alg}\|_1} \le m \cdot 2\varepsilon M/m \le  O(\varepsilon) \veps \;,$$
and thus 
$$\|A-A^{alg}\|_1 = \sum_{j \in [m]\setminus {\cal H}}{\|A-A^{alg}\|_1} + \sum_{j \in {\cal H}}{\|A-A^{alg}\|_1} \le  O(\varepsilon) \veps + \sum_{j \in {\cal H}}{\|A-A^{alg}\|_1}\;.$$
In what follows we show that there are $mk$-entries in $D$ such that if we replace all the values in these entries by zeroes, then the $\ell_1$ norm of the resulting matrix is at most $\veps$. 
Let $A^*$ denote a $(km)$-sparse matrix satisfying $\|A - A^*\|_1 \le \veps $. Without loss of generality we may assume that $A^*$ agrees with $A$.
Let $A^{**}$ be an $n \times m$ matrix composed of exactly those non-zero entries of $A^*$ which are still equal to $0$ in $A^{\ell}$. All other entries of $A^{**}$ are zeroes. Formally, for every $j \in [m]$, $\supp(A^{**}_j) = \supp(A_j^*) \setminus \supp(A_j^{init})$, and for every $i \in \supp(A_j^{**})$, $A^{**}_{ij} = A^*_{ij}$.

\begin{claim}\label{c:A**}
$A^{**}$ is $(km)$-sparse and in addition $\|D_j- A^{**}_j\|_1 \le \|A_j - A^*_j\|_1$ for all $j \in [m]$. 
\end{claim}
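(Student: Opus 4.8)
The plan is to treat the two assertions separately. The sparsity bound is immediate: by construction $\supp(A^{**}_j)=\supp(A^*_j)\setminus\supp(A^{init}_j)\subseteq\supp(A^*_j)$, so $\|A^{**}\|_0=\sum_{j}\|A^{**}_j\|_0\le\sum_j\|A^*_j\|_0=\|A^*\|_0\le km$. For the per-column inequality, the strategy I would follow is to prove that $D_j-A^{**}_j$ \emph{agrees} with $A_j-A^*_j$, and then simply invoke part~2 of the agreement claim, which gives $\|D_j-A^{**}_j\|_1\le\|A_j-A^*_j\|_1$ for free. This routes the estimate entirely through the agreement machinery set up in the preliminaries, rather than through a direct norm computation.

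To set this up I would first record two facts from the standing assumptions that $A^{init}$ and $A^*$ each agree with $A$. Applying part~1 of the agreement claim to each, $D_j=A_j-A^{init}_j$ agrees with $A_j$ and $A_j-A^*_j$ agrees with $A_j$; concretely, $\supp(D_j)=\supp(A_j)\setminus\supp(A^{init}_j)$ with $D_{ij}=A_{ij}$ on that set, and $\supp(A_j-A^*_j)=\supp(A_j)\setminus\supp(A^*_j)$ with value $A_{ij}$ there. I would then verify the desired agreement by a short coordinate-wise case split according to membership in the three nested supports $\supp(A^{init}_j),\supp(A^*_j)\subseteq\supp(A_j)$: on $\supp(A^{init}_j)$ both $D_j$ and $A^{**}_j$ vanish; outside $\supp(A_j)$ every term vanishes; and on $\supp(A_j)\setminus\supp(A^{init}_j)$ the value $D_{ij}=A_{ij}$ is exactly cancelled by $A^{**}_{ij}=A^*_{ij}=A_{ij}$ when $i\in\supp(A^*_j)$, so a nonzero entry (equal to $A_{ij}$) survives only for $i\in\supp(A_j)\setminus(\supp(A^{init}_j)\cup\supp(A^*_j))$. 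This surviving support lies inside $\supp(A_j)\setminus\supp(A^*_j)=\supp(A_j-A^*_j)$ and the values coincide, which is exactly the agreement I want.

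I expect the only delicate point to be the bookkeeping of matching \emph{values}, not merely supports, since agreement is an asymmetric relation. In particular, the cancellation in the last case relies on $A^*_{ij}=A_{ij}$ on $\supp(A^*_j)$ (because $A^*$ agrees with $A$), while the vanishing of $D_j$ on $\supp(A^{init}_j)$ relies on $A^{init}_{ij}=A_{ij}$ there (because $A^{init}$ agrees with $A$). Once these value identities are tracked carefully, no cancellation is accidental and the case analysis closes; everything else is routine.
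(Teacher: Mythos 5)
Your proposal is correct and follows essentially the same route as the paper: both arguments reduce the norm inequality to the observation that $D_j-A^{**}_j$ is $A_j$ restricted to $\supp(A_j)\setminus\bigl(\supp(A^{init}_j)\cup\supp(A^*_j)\bigr)\subseteq\supp(A_j-A^*_j)$, and then invoke the agreement machinery (the paper phrases this as ``$A^{init}_j+A^{**}_j$ agrees with $A_j$ and $A^*_j$ agrees with $A^{init}_j+A^{**}_j$'' rather than via your explicit coordinate-wise case split, but the content is identical). The sparsity bound is handled the same way in both.
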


\begin{proof}
By definition $A^{**}$ agrees with $A^{*}$, and thus $\|A^{**}\|_0 \le \|A^*\|_0 = km$. 
Next, fix some $j \in [m]$. Then $D_j - A^{**}_j = A_j - A_j^{init} - A^{**}_j$. Since $A_j^{*}$ agrees with $A_j$, we get that $A_j^{**}$ agrees with $A_j$. It follows that $A^{**}_j$ and $A_j^{init}$ both agree with $A_j$ and in addition, $\supp(A^{**}_j) \cap \supp(A^\ell_j) = \emptyset$. Therefore $A_j^{init} + A^{**}_j$ agrees with $A_j$. Moreover, 
$$\supp(A_j^*) \subseteq (\supp(A^{**}_j) \cup \supp(A^{init}_j)) = \supp(A^{**}_j + A^{init}_j) \subseteq \supp(A_j)\;.$$ Therefore $A_j^*$ agrees with $A_j^{init} + A^{**}_j$, and thus 
$$\|D_j - A_j^{**}\|_1 = \|A_j - A_j^{init} - A^{**}_j\|_1 = \|A_j - (A_j^{init} + A^{**}_j)\|_1 \le \|A_j - A_j^*\|_1 \;.$$
\end{proof}

In order to bound $\sum_{j \in {\cal H}}{\|A-A^{alg}\|_1}$ we define the following subsets of ${\cal H}$.
$${\cal X} \eqdef \left\{j \in {\cal H} : \frac{100km}{\varepsilon|{\cal E}_{w(j)}|} \ge \|A^{**}_j\|_0 \right\}$$
$${\cal Y}_i \eqdef \left\{j \in {\cal E}_i : \frac{100km}{\varepsilon|{\cal E}_i|} < \|A^{**}_j\|_0 \right\} ,\quad \quad i \in [\log (m/\varepsilon)] \;\;.$$
Since $\bigcup_{i \in [\log (m/\varepsilon)]}{\cal E}_i = {\cal H}$, we conclude that
\begin{equation}
\sum_{j \in {\cal H}}{\|D_j^\ell - A_j^{tmp}\|_1} = \sum_{j \in {\cal X}}{\| D_j^\ell - A_j^{tmp}\|_1} + \sum_{i \in [\log (m/\varepsilon)]}{\sum_{j \in {\cal Y}_i}{\| D_j^\ell - A_j^{tmp}\|_1}}
\label{eq:partitionSparsity}
\end{equation}
\begin{claim}\label{c:sumDense}
$\sum_{j \in {\cal X}}{\|A_j - A_j^{alg}\|_1} \le (1+\varepsilon)\veps$
\end{claim}
\begin{proof}
Let $j \in {\cal X}$, and denote $k_j = \frac{100km}{\varepsilon|{\cal E}_{w(j)}|}$. Then $A_j^{tmp} = D'\left(k_j\right)$ and thus by Theorem~\ref{th:compNoise} 
$$\|D_j - A_j^{tmp}\|_1 \le (1+\varepsilon)\min\limits_{x^* \;\; k_j-sparse}\|D_j - x^*\|_1 \le  (1+\varepsilon) \|D_j - A_j^{**}\|_1 \le (1+\varepsilon)\|A_j - A^*_j\|_1 \;,$$ 
where the inequality before last is due to the fact that $j \in {\cal X}$ and therefore $A_j^{**}$ is $k_j$-sparse. Summing over all $j \in {\cal X}$ we get 
$$\sum_{j \in {\cal X}}{\|D_j - A_j^{tmp}\|_1} \le (1+\varepsilon) \sum_{j \in [m]}{\|A_j - A_j^{*}\|_1} \le (1+\varepsilon) \veps  \;.$$
\end{proof}

\begin{claim}\label{c:sumSparse}
$\sum_{i \in [\log (m/\varepsilon)]}{\sum_{j \in {\cal Y}_i}{\|D_j - A_j^{tmp}\|_1}} \le O(\varepsilon)\veps$
\end{claim}
\begin{proof}
Fix $i \in [\log (m/\varepsilon)]$.
For every $j \in {\cal Y}_i \subseteq {\cal E}_i$, $w(j)=i$. Since $A^{**}$ is $(km)$-sparse, $\mathbb{E}_{j \in {\cal E}_i}[\|A_j^{**}\|_0] \le \frac{km}{|{\cal E}_i|}$. By Markov's inequality, $$\Pr_{j \in {\cal E}_i}\left[j \in {\cal Y}_i\right] = \Pr_{j \in {\cal E}_i}\left[\|A_j^{**}\|_0 > \frac{100 km}{\varepsilon|{\cal E}_i|}\right] \le \frac{\varepsilon}{100} \;.$$
Since, in addition, $A_j^{tmp}$ agrees with $D_j$ we get that
$$
\sum_{j \in {\cal Y}_i}{\| D_j - A_j^{tmp}\|_1} \le \sum_{j \in {\cal Y}_i}{\| D_j\|_1} \le 2\sum_{j \in {\cal Y}_i}{\varrho(D_j)} \le 2\sum_{j \in {\cal Y}_i}{2^{-i+1}M} = 4 \cdot 2^{-i}M|{\cal Y}_i| \;,$$
where the last inequality is due to the fact that ${\cal Y}_i \subseteq {\cal E}_i$.
Note that $\sum_{j \in {\cal E}_i}{\varrho(D_j)} > 2^{-i}M|{\cal E}_i|$, and since $|{\cal Y}_i| \le \frac{\varepsilon}{100}|{\cal E}_i|$ we get that 
$$\sum_{j \in {\cal Y}_i}{\| D_j - A_j^{tmp}\|_1} \le 4 \cdot 2^{-i}M|{\cal Y}_i| \le \frac{4\varepsilon}{100} \cdot 2^{-i}M|{\cal E}_i| < \frac{4\varepsilon}{100} \sum_{j \in {\cal E}_i}{\varrho(D_j)} \le \frac{8\varepsilon}{100} \sum_{j \in {\cal E}_i}{\|D_j\|_1}\;.$$
We can therefore conclude that
$$\sum_{i \in [\log(m/\varepsilon)]}{\sum_{j \in {\cal Y}_i}{\| D_j - A_j^{tmp}\|_1}} \le \frac{8\varepsilon}{100} \sum_{j \in [m]}{\|D_j\|_1} \le O(\varepsilon)\veps\;,$$
where the last inequality is due to the fact that $\|D\|_1 = \|A - A^{init}\|_1 \le O(1)\veps$.
\end{proof}

Lemma~\ref{l:1PlusEps} now directly follows from Claims~\ref{c:sumDense}, \ref{c:sumSparse}. 
The proof of Theorem~\ref{th:main} is now complete.

\section{Adaptivity is Necessary Even for Noise-Free Signals} \label{s:nonAdapt}
To prove Theorem~\ref{t:nonAdaptiveLower}, we first note that every non-adaptive scheme $\ALG$ to the problem of reconstructing a $km$-sparse matrix $A \in \mathbb{R}^{n \times m}$ can be viewed as the concatenation of two algorithms. $\ALG^s$ constructs sensing matrices $S_1,\ldots,S_m$, and $\ALG^r$ is given the measurements $S_1A_1,\ldots,S_mA_m$, and recovers $A$. For every $j \in [m]$, let $t_j$ denote the number of rows of (i.e. measurements performed by) $S_j$. The total number of linear measurements performed by the scheme is therefore $t_{\ALG} \eqdef \sum_{j \in [m]}{t_j}$. 

Assume that for every $A \in \mathbb{R}^{n \times m}$, $\ALG$ reconstructs $A$ with success probability $\ge \frac{1}{2}$. Since $\ALG$ is non-adaptive, the number of measurements $r_{\ALG}$ depends only on $k,m,n$.

By Yao's minimax principle, it suffices to show a distribution ${\cal D}$ over 
matrices in $\mathbb{R}^{n \times m}$ such that for every deterministic algorithm $\ALG_{det}$, if $\Pr_{A \sim {\cal D}}[\ALG_{det} \; succeeds] \ge \frac{1}{2}$, then $t_{\ALG_{det}} \ge \Omega(mn)$. 
Consider a matrix $A$ constructed as follows. Choose uniformly at random $i^* \in [m]$. For every $j \in [m] \setminus \{i^*\}$, let $A_j = (1,0,0,\ldots,0)^t$, and let $A_{i^*} = (x_1,\ldots,x_n)^t$, where $x_1,\ldots,x_n \sim N(0,1)$ i.i.d. 
$\ALG_{det}^s$ constructs sensing matrices $S_1,\ldots,S_m$. The following lemma implies Theorem~\ref{t:nonAdaptiveLower}. 
\begin{lemma}\label{l:recProb}
Fix $j \in [m]$, and assume that $t_j \le n-1$, then conditioned on $i^*=j$, $\ALG_{det}^r$ fails to recover $A_j$ with probability $1$.
\end{lemma}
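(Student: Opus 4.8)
The plan is to condition on the observation and exploit the fact that, once $i^*=j$, the only randomness left in the measurements seen by $\ALG_{det}^r$ comes from the single Gaussian column $A_j$. First I would record the key structural observation: conditioned on $i^*=j$, every other column satisfies $A_{j'}=(1,0,\ldots,0)^t$, so the measurement vectors $S_{j'}(1,0,\ldots,0)^t$ for $j'\neq j$ are deterministic constants independent of the random vector $g\eqdef A_j\sim N(0,I_n)$. Since the scheme is non-adaptive and $\ALG_{det}^r$ is deterministic, the reconstructed column $A_j'$ is a fixed function of the entire measurement tuple, and after freezing the constant parts it becomes a deterministic function $A_j'=h(y)$ of the single measurement $y\eqdef S_jg\in\R^{t_j}$.

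Next I would use the dimension count. Because $t_j\le n-1$, the map $S_j$ has rank at most $n-1$, so $K\eqdef\Ker(S_j)$ satisfies $\dim K\ge n-t_j\ge1$. Decompose $g=g_K+g_\perp$ with $g_K\in K$ and $g_\perp\in K^\perp$. For the isotropic Gaussian $g\sim N(0,I_n)$ the orthogonal projections $g_K$ and $g_\perp$ are jointly Gaussian and uncorrelated, hence independent, and $y=S_jg=S_jg_\perp$ depends only on $g_\perp$. Therefore conditioning on the observed value $y$ leaves the distribution of $g_K$ unchanged: it is still a non-degenerate Gaussian supported on the subspace $K$ of dimension at least one.

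Finally I would finish the probability computation. Exact recovery of $A_j$ succeeds only if $g=A_j'=h(y)$; projecting onto $K$ this forces $g_K=P_K\,h(y)$, a fixed vector once $y$ is given. Since, conditioned on $y$, the vector $g_K$ has a continuous distribution on a space of dimension $\ge1$, the event $\{g_K=P_K\,h(y)\}$ has conditional probability $0$ for every value of $y$. Averaging over $y$ yields $\Pr[\,\ALG_{det}^r \text{ recovers } A_j \mid i^*=j\,]=0$, i.e.\ failure with probability $1$.

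The step I expect to be the main obstacle, and the one where a naive argument silently breaks, is justifying that the target is fixed before the relevant randomness is revealed. One is tempted to argue directly that $g$ must land in the image of $h\circ S_j$, a set of ``dimension $t_j<n$'', but for an arbitrary, possibly discontinuous recovery function $h$ this image can fail to have Lebesgue measure zero (think of a space-filling map), so that route is unsound. The correct order of quantifiers is essential: I must first condition on $y=S_jg$, which simultaneously pins down the deterministic output $h(y)$ and freezes only the $K^\perp$-component of $g$, and only then invoke the surviving continuous randomness of $g_K$ along the at-least-one-dimensional kernel to kill the success probability. Some care is also needed to confirm that the measurements of the other $m-1$ columns carry no information about $A_j$ once $i^*=j$ is fixed, which is precisely what non-adaptivity guarantees.
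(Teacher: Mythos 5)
Your proof is correct and follows essentially the same route as the paper: decompose the Gaussian column along $\Ker(S_j)$ and its orthogonal complement, note that the measurement $S_jA_j$ is independent of the (at least one-dimensional) kernel component, and conclude that exact recovery forces that continuous component to hit a single point determined by the observation, an event of probability zero. Your remark about conditioning on $y$ before invoking the residual randomness (rather than arguing about the image of the recovery map) is exactly the order of quantifiers the paper's proof uses.
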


\begin{proof}
Conditioned on $i^*=j$, the distribution of $A_j$ is independent of $\{A_i\}_{i \ne j}$. We can therefore analyze the success probability of $\ALG_{det}^r$ on $A_j$ as if $\ALG_{det}^r$ receives only $(S_j,S_jA_j)$ as input, and attempts to recover $A_j$.
Denote $U = \Ker(S_j)$ and $\nu = \dim U$, then since $S_j$ is underdetermined, $\nu \ge 1$, and there is an orthonormal basis $u_1,\ldots,u_{n}$ to $\mathbb{R}^{n}$ satisfying that $u_1,\ldots,u_\nu$ is a basis of $U$, and $u_{\nu+1},\ldots,u_{n}$ is a basis of $U^\bot$. For every $y \in \Img(S_j)$ there exists a unique $x' \in U^\bot$ satisfying $S_jx = y$. 
Since $\ALG_{det}^r$ is deterministic, there exists a unique $x \in \mathbb{R}^{n}$ such that $\ALG_{det}^r$ returns $x$ when invoked on $(S_j,y)$. 

Denote $A_j = \sum_{\ell \in [n]}{y_\ell u_\ell}$, then since $\{u_\ell\}_{\ell \in [n]}$ is orthonormal, then $y_1,\ldots,y_{n} \sim N(0,1)$ i.i.d. Following the above discussion, the value of $S_jA_j$ is independent of $y_1,\ldots,y_\nu$. Moreover, for every $y_{\nu+1},\ldots,y_{n} \in \mathbb{R}$, there exist unique values $y^*_1,\ldots,y^*_\nu$ such that $\ALG_{det}^r(S_j,S_jA_j)$ is correct if and only if $y_\ell = y^*_\ell$ for all $\ell \in [\nu]$. Therefore,
$$\Pr[\ALG_{det}^r \; fails | i^* = j] = \int_{y_{\nu+1},\ldots,y_{n}\in \mathbb{R}} {\int_{(y_1,\ldots,y_\nu) \ne (y_1^*,\ldots,y_\nu^*)}{\left(\prod_{\ell \in [n+1]}{f(y_\ell)}\right)}dy_1\ldots dy_{n}} = 1 \;,$$
where $f$ is the pdf of the standard normal distribution.
\end{proof}
\begin{proof}[Proof of Theorem~\ref{t:nonAdaptiveLower}]
Denote $J \eqdef \{j \in [m] : t_j \ge n\}$. From Lemma~\ref{l:recProb}, whenever $j \notin J$, $\Pr[\ALG_{det}^r \; recovers \; A_j| i^* = j] = 0$.
Since $\ALG_{det}$ reconstructs $A$ with probability $\ge \frac{1}{2}$, then 
$$\frac{1}{2} \le \Pr[\ALG_{det}^r \; recovers \; A] = \Pr[\ALG_{det}^r \; recovers \; A| i^* \in J] \Pr[i^* \in J] \le \Pr[i^* \in J] \;.$$
Therefore,
$$t_{\ALG_{det}} = \sum_{j \in [m]}{t_j} \ge \sum_{j \in [m] : t_j \ge n}{t_j} \ge n \cdot \frac{m}{2} \ge \Omega(mn) \;.$$
\end{proof}

\section{Future Directions}
\label{sec:conclusion}
A natural important problem that arises from the main results, 
and is not completely resolved within the context of the paper 
is obtaining tight bounds on the number of adaptive 
rounds needed for a batch sparse recovery scheme that performs $\tilde{O}(km)$ linear measurements. 
A more refined question asks for the correct tradeoff between 
the number of adaptive rounds and number of linear measurements.

We showed in Theorem~\ref{t:nonAdaptiveLower} that 
every random non-adaptive scheme for batch sparse recovery 
must perform $\Omega(mn)$ measurements for every $A$ and $k$ with high probability, 
even for the case $\veps = 0$. 
As it turns out, for the restricted case $\veps=0$, two adaptive rounds 
are indeed enough to get an optimal number of linear measurements. 
Indeed, in the first round the algorithm estimates,
up to a constant factor, 
the ``correct'' sparsity bound of each column, i.e., $\|A_j\|_0$,
which can be done by performing $O(\log n)$ 
linear measurements on every column \cite{KNW10}. 
In the second round of measurements, 
the algorithm employs stable sparse recovery (e.g. Theorem~\ref{th:compNoise}) 
to reconstruct the unknown entries 
while using the correct sparsity bound, up to a constant factor. 
The total number of measurements is therefore $O(mk \log n)$, 
which is optimal by known sparse recovery lower bounds.

For arbitrary values of $\veps \in (0,1)$, however, 
the scheme presented in this paper uses $O(\log m)$ adaptive rounds of measurements 
in order to bound the number of measurements by $\tilde{O}(km)$.
We suspect that a doubling approach similar to that in Algorithm~\ref{alg:noise} is, 
in a sense, required and therefore every batch recovery scheme that performs $\tilde{O}(km)$ 
measurements must perform $\Omega(\log m)$ adaptive rounds.

A good starting point for studying this question lies in a ``noise-capped'' version of stable sparse recovery, 
where the input is $x\in\R^n$ as well as an intended noise level $\veps$ 
instead of an intended sparsity bound $k\in[n]$.
The goal is to recover $x' \in \R^n$ that satisfies $\|x - x'\|_1 \le O(\veps)$
using a small number of linear measurements to $x$,
where small is with respect to the minimal sparsity $k$ 
needed to approximate $x$ with relative error $\veps$.
Note that while each of $k,\veps$ completely determines the other,
which of them is given explicitly does matter algorithmically.
A straightforward doubling scheme attains the optimal number of measurements 
in $O(\log k)$ adaptive rounds. 
Moreover, it can be shown, similarly to the proof of Theorem~\ref{t:nonAdaptiveLower} 
that every non-adaptive scheme must use $\Omega(n)$ measurements.

While noise-capped sparse recovery appears to be a straightforward, or even na{\"i}ve problem, finding the correct tradeoff between the number of linear measurements even for this simple variant proves highly non-trivial.
Specifically, known techniques previously used to prove sparse recovery lower bounds \cite{BIPW10,PW12}, and specifically adaptive sparse recovery lower bounds \cite{PW13}, do not extend to this problem.


\newcommand{\etalchar}[1]{$^{#1}$}

\end{document}